\newtheorem{theorem}{Theorem}
\newtheorem{assumption}{Assumption}
\newtheorem{proposition}{Proposition}
\newtheorem{remark}{Remark}
\newtheorem{problem}{Problem}
\newcommand{\figurename}[1]{Figure{#1}}
\newcommand{\tablename}[1]{Table{#1}}
\title{\LARGE \bf Direct data-driven model-reference control\\with Lyapunov stability guarantees 
}
\author{Valentina Breschi$^{1}$, Claudio De Persis$^{2}$, Simone Formentin$^{1}$ and Pietro Tesi$^{3}$
\thanks{*This project was partially supported by the Italian Ministry of University and Research under the PRIN'17 project \textquotedblleft Data-driven learning of constrained control systems", contract no. 2017J89ARP.}
	\thanks{$^{1}$ Dipartimento di Elettronica, Informatica e Bioingegneria (DEIB), Politecnico di Milano, Piazza Leonardo da Vinci 32, 20133 Milano, Italy. Corresponding author: {\tt\small valentina.breschi@polimi.it}}
	\thanks{$^{2}$ ENTEG, Faculty of Science and Engineering, University of Groningen, 9747 AG Groningen, The Netherlands.}
	\thanks{$^{3}$ Department of Information Engineering (DINFO), University of Florence, 50139 Florence, Italy.}
}
\begin{document}
	
	\maketitle
	\thispagestyle{empty}
	\pagestyle{empty}

	\begin{abstract}
		We introduce a novel data-driven model-reference control design approach for unknown linear systems with fully measurable state. The proposed control action is composed by a static feedback term and a reference tracking block, which are shaped from data to reproduce the desired behavior in closed-loop. By focusing on the case where the reference model and the plant share the same order, we propose an optimal design procedure with Lyapunov stability guarantees, tailored to handle state measurements with additive noise. Two simulation examples are finally illustrated to show the potential of the proposed strategy.
	\end{abstract}
	
\section{Introduction} 

In many control applications, it is usually the case that the desired performance is expressed in terms of an \textit{a-priori specified} closed-loop model, to be matched using a controller with a given structure \cite{nguyen2018model}. 

When a model of the plant to be controlled is available, the above model-matching issue can be formulated as an \textit{interpolation} problem \cite{zhou1996robust}. More often, the mathematical description of the system is not given and thus a model needs to be identified from a set of experimental data 
\cite{sysid}. 
Such a 2-step procedure, namely the sequence of system identification and model-based control design, has been proven to lead to potentially sub-optimal solutions, in that the model that best fits the data is not necessarily also the most suitable for controller tuning \cite{formentin2014comparison}. 

Alternative approaches have thus been proposed to  map the data \textit{directly} onto the controller parameters without undertaking a full modeling study, see, \textit{e.g.,} \cite{hjalmarsson2002iterative,van2011data,batt2018,
formentin2019deterministic}. Such techniques are usually referred to as ``direct data-driven" approaches. Albeit appealing and effective (see, \textit{e.g.,} the successful applications in \cite{campestrini2016unbiased,formentin2012data}), these approaches suffer from 
few drawbacks, which prevent them from being real competitors of more traditional model-based strategies. For instance, data-driven methods are mostly 
conceived to handle SISO (Single Input Single Output)systems, or at most MIMO (Multiple Inputs Multiple Outputs) ones with few input/output channels \cite{formentin2012non}, which make them less suitable 
when the size of the inputs and outputs 
is large. Moreover, stability is guaranteed only asymptotically, i.e.,
as the number of data goes to infinity, see,
e.g., \cite{batt2018,formentin2012data,van2011data}.

In this work, we propose a new direct data-driven design strategy for model-reference control, 
which is endowed with stability guarantees. 
The key technical step is the data-based 
representation of linear systems originally proposed in \cite{WILLEMS2005} within the behavioral 
framework, and recently reconsidered using state-space representations
in \cite{DePersis2020}. 
This data-based representation
has emerged as a powerful tool for analizing the properties of dynamic systems, \emph{e.g.,} to retrieve the passivity indexes of a system \cite{Romer2019,Romer2019bis,Tanemura2019}, and for tackling many important
control problems, such as 
predictive 
\cite{Coulson2018,berberich2019MPC,Allibhoy2021}, optimal and robust control 
\cite{DePersis2020,berberich2019robust,vanwaarde2020noisy,Matni2020,baggio2020}, as well as control of time-varying and nonlinear systems \cite{DePersis2020,Guo2020,Nortmann2020}. 
Inspired by \cite{DePersis2020}, we tackle the model-reference control problem in a state-space setting. For the case of noise-free data, we show that this problem can be {equivalently} cast as  a semi-definite program, and thus solved using efficient optimization solvers. We also account for possible model mismatches by considering a regularization-based strategy that moves the matching constraints to the objective function (see \cite{dorf2020} for an excellent discussion on the use of regularization techniques in the context of data-driven control). In the spirit of \cite{deptLQR}, we finally address the case of noisy measurements by considering a strategy based on averaging multiple experiments and derive sufficient conditions for closed-loop stability. 
This strategy is tested on two different numerical examples through Monte Carlo simulations.
Throughout the paper, we will also discuss 
some variants of the baseline solution which can be adopted to increase the degree of
robustness to noise and to find approximate solutions when the choice of the reference
model results in an unfeasible problem.

The remainder of the paper is as follows. In Section \ref{Sec:setting}, the model-reference 
control design problem is formally stated. Section \ref{Sec:DD_closedloop} describes the 
key technical passages allowing us to formulate the entire problem as a data-based optimization task. 
The main results are presented 
in Section \ref{Sec:matching_prob} and 
Section \ref{Sec:noise}, and some examples are discussed in Section \ref{Sec:examples}.
Section \ref{sec:conclusions} ends the paper with concluding remarks.


\section{Setting and problem formulation}\label{Sec:setting} 
 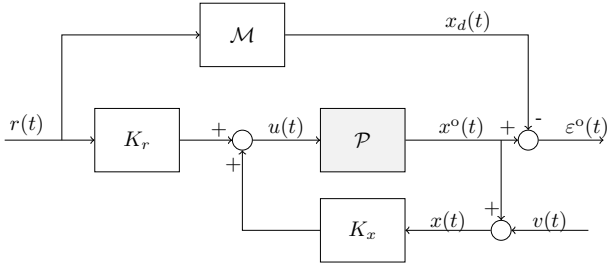
\begin{figure}[!tb]
 	\centering
 	\scalebox{0.8}{\begin{tikzpicture}
 		\node [coordinate] (reference){};
 		\node [coordinate,node distance=.95cm, right of=reference] (aid1){};
 		\node [draw, rectangle, 
 		minimum height=3em, minimum width=4em, node distance=1.25cm, right of=aid1] (Kr) {$K_{r}$};
 		\node [draw, circle, node distance=1.75cm, right of=Kr] (sum) {};
 		\node [draw, rectangle,  node distance=2cm,
 		minimum height=3em, minimum width=4em, right of=sum,fill=black!5] (system) {$\mathcal{P}$};
 		 \node [draw, rectangle,  node distance=1.5cm,minimum height=3em, minimum width=4em, below of=system,] (Kx) {$K_x$};
 		\node [draw, circle, node distance=2.75cm, right of=system] (sum3) {};
 		\node [coordinate,node distance = 0.45cm, left of =sum3] (aid2) {};
 		\node [draw,circle,node distance = 1.5cm, below of =aid2] (sum4) {};
 		\node [coordinate, right of=sum3, node distance=1.25cm] (output) {};
 		\node [coordinate,right of = sum3,node distance=1cm] (aid6) {};
 		\node [coordinate,below of = aid6,node distance=1.5cm] (noise) {};
 		\node [draw, rectangle,minimum height=3em, minimum width=4em, above of=sum, node distance=1.75cm] (M) {$\mathcal{M}$};
 		
 		\draw [draw,->] (reference) -- node [yshift=0.25cm,near start] {$r(t)$} (Kr);
 		\draw [draw,->] (Kr) -- node [near end,yshift=0.175cm] {$\tiny{+}$} (sum);
		\draw [draw,->] (aid1) |- node {} (M);
 		\draw [draw,->] (sum) -- node [yshift=0.2cm] {$u(t)$} (system);
 		\draw [draw,->] (aid2) -- node[near end,xshift=-.15cm,yshift=-0.15cm] {$\tiny{+}$} (sum4);
 		\draw [draw,->] (sum4) -- node[yshift=0.15cm] {$x(t)$} (Kx);
 		\draw [draw,->] (Kx) -| node [near end,yshift=0.425cm,xshift =-0.15cm] {$\tiny{+}$} (sum);
		\draw[draw,->] (system) -- node [yshift=0.2cm] {$x^{\mathrm{o}}(t)$} node [near end,yshift=.2cm,xshift=.3cm] {$\tiny{+}$} (sum3);
		\draw [draw,->] (M) -| node [near end,xshift=0.175cm,yshift=-.65cm] {-} node [xshift=-1cm,yshift=.225cm] {$x_{d}(t)$} (sum3);
		\draw [draw,->] (sum3) -- node[yshift=0.2cm,near end]{$\varepsilon^{\mathrm{o}}(t)$}(output);
		\draw[->] (noise)--node[yshift=0.15cm]{$v(t)$}(sum4);
 		\end{tikzpicture}}	 \vspace{-.1cm}
 	\caption{Matching scheme. The unknown plant $\mathcal{P}$ is highlighted in gray. The mismatch error is denoted with $\varepsilon^{\mathrm{o}}(t)$.
 	}\label{Fig:dd_matchingscheme}
 \end{figure}

Let $\mathcal{P}$ be a discrete-time \emph{linear time invariant} (LTI) system, the evolution of which is dictated by the following equations:
\begin{subequations}\label{eq:noiseless_model}
\begin{equation}
\mathcal{P}:\quad \begin{cases}
x^{\mathrm{o}}(t+1)=Ax^{\mathrm{o}}(t)+Bu(t),\\
x(t) = x^{\mathrm{o}}(t) + v(t),
\end{cases}
\end{equation}
\end{subequations}
where $x^{\mathrm{o}}(t) \in \mathbb{R}^{n}$ is the state, 
$u(t) \in \mathbb{R}^{m}$ is an exogenous input signal fed into the system at time $t \in \mathbb{N}$, and we measure the noisy state $x(t) \in \mathbb{R}^{n}$, 
where $v(t) \in \mathbb{R}^{n}$ 
represents {measurement noise}.

We aim at designing a \emph{stabilizing} controller that matches 
a user-defined reference model $\mathcal{M}$, dictating 
the desired 
closed-loop response to a customizable set point $r(t) \in \mathbb{R}^{n}$. 
The reference model is 
characterized by the equation
\begin{equation}\label{eq:desired_behavior}
\mathcal{M}: \quad 
x_{d}(t+1)=A_{M}x_{d}(t)+B_{M}r(t)
\end{equation} 
with $x_{d}(t) \in \mathbb{R}^{n}$ indicating the desired state at time $t$.
Here, the reference model $\mathcal{M}$ is assumed to be 
stable and provided at design time, thus its matrices 
are fixed and known. The matching problem can 
be formally stated as follows.

\begin{problem}[Matching problem]\label{pb:MMstatic_problem}
	Let $\mathcal{P}$ be an LTI system as in \eqref{eq:noiseless_model} and  $\mathcal{M}$ be a stable reference model as in \eqref{eq:desired_behavior}.
	The matching problem amounts to finding two control matrices 
	$K_{x}, K_{r} \in \mathbb{R}^{m \times n}$ such that
\begin{subequations}\label{eq:MB_matching}
	\begin{align}
	&A+BK_{x}=A_{M}, \\
	&BK_{r}=B_{M},
	\end{align}
\end{subequations}
If such matrices 
exist, 
we say that the matching problem is \emph{feasible}. \hfill $\square$
\end{problem} 
\indent This definition 
comes from the fact that, if the equations in \eqref{eq:MB_matching} have a solution, then the control law
 \begin{equation}\label{eq:noisy_law}
u(t)=K_{x}x(t)+K_{r}r(t),
\end{equation} 
ensures that the noiseless behavior of $\mathcal{P}$ matches that of $\mathcal{M}$. Note that $K_{x}$ is responsible for closed-loop stability while $K_{r}$ is a feed-forward term allows one to attain 
the desired response to 
$r(t)$. The considered matching scheme is depicted in the
block diagram of \figurename{~\ref{Fig:dd_matchingscheme}}. 

In principle, solving Problem~\ref{pb:MMstatic_problem} requires the knowledge of the plant matrices $(A,B)$ characterizing \eqref{eq:noiseless_model}. In this paper, we are interested in solving the matching problem when $A$ and $B$ are \emph{unknown}, and we have access to a finite set of input-state pairs only. The available data are obtained by applying an input signal $\mathcal{U}_{T-1}=\{u(t)\}_{t=0}^{T-1}$ to $\mathcal{P}$ and measuring the 
response of the system $\mathcal{X}_{T}=\{x(t)\}_{t=0}^{T}$, where $T$ denotes the length of the experiment\footnote{Open-loop experiments can be carried out on stable systems, while a stabilizing controller is assumed to be available to perform closed-loop experiments when the plant is unstable.}. This new matching problem is formalized as follows.

\begin{problem}[Data-driven matching problem]\label{pb:DDCstatic_problem}
	Let $\mathcal{P}$ be an LTI system as in \eqref{eq:noiseless_model}, and 
	let $\mathcal{M}$ be a stable reference model as in \eqref{eq:desired_behavior}.
	The data-driven matching problem amounts to finding two control matrices 
	$K_{x}, K_{r} \in \mathbb{R}^{m \times n}$ satisfying the conditions in \eqref{eq:MB_matching} by using a set of data $(\mathcal{U}_{T-1},\mathcal{X}_{T})$.
	\hfill $\square$
\end{problem} 
 
\section{Data-based description of the closed-loop}\label{Sec:DD_closedloop}
Consider an experiment of length $T$ carried out on $\mathcal P$, define the following data matrices:
\begin{subequations}\label{eq:aid_matrices}
	\begin{align}
	X_{0,T-1}&=\begin{bmatrix}
	x(0) & x(1) & \cdots & x(T-1)
	\end{bmatrix}, \label{eq:X0T1}\\
	U_{0,T-1}&=\begin{bmatrix}
	u(0) & u(1) & \cdots & u(T-1)
	\end{bmatrix},\label{eq:U0T1}\\
	V_{0,T-1}&=\begin{bmatrix}
	v(0) & v(1) & \cdots & v(T-1)
	\end{bmatrix},\label{eq:V0T1}\\
	X_{1,T}&=\begin{bmatrix}
	x(1) & x(2) & \cdots & x(T)
	\end{bmatrix},\label{eq:X1T}\\
	V_{1,T}&=\begin{bmatrix}
	v(1) & v(2) & \cdots & v(T)
	\end{bmatrix}, \label{eq:V1T} \vspace{-.1cm}
	\end{align}
\end{subequations}  
and let $X_{0,T-1}^{\mathrm{o}}$ and $X_{1,T}^{\mathrm{o}}$ be
the noiseless counterparts of the matrices in \eqref{eq:X0T1} and \eqref{eq:X1T}, \emph{i.e.,}
\begin{subequations}\label{eq:aid_noiseless}
	\begin{align}
	X_{0,T-1}^{\mathrm{o}}&=\begin{bmatrix}
	x^{\mathrm{o}}(0) & x^{\mathrm{o}}(1) & \cdots & x^{\mathrm{o}}(T-1)
	\end{bmatrix}, \label{eq:X0T1_noiseless}\\
	X_{1,T}^{\mathrm{o}}&=\begin{bmatrix}
	x^{\mathrm{o}}(1) & x^{\mathrm{o}}(2) & \cdots & x^{\mathrm{o}}(T)
	\end{bmatrix}. \label{eq:X1T_noiseless} \vspace{-.1cm}
	\end{align}
\end{subequations}
Consider now the following assumption, which is related to the richness of the data.
\begin{assumption} \label{ass:rank_condition}
	The following condition holds:
	\begin{equation}\label{eq:rank_condition}
	\mbox{rank}\left( 
	\begin{bmatrix}
	U_{0,T-1}\\ X_{0,T-1}
	\end{bmatrix}
	\right)=n+m. \vspace{-.25cm}
	\end{equation}
	\hfill $\square$
\end{assumption}
As shown 
next
,
condition \eqref{eq:rank_condition} makes it possible 
to express the behavior of $\mathcal P$
in feedback with \eqref{eq:noisy_law}
purely in terms of the data matrices in \eqref{eq:aid_matrices} for \emph{any} 
control gains $K_x$ and $K_r$. For controllable systems and noiseless 
data this condition 
can be enforced at the design stage by
choosing $\mathcal{U}_{T-1}$ as a \emph{persistently exciting} 
signal \cite{WILLEMS2005}. 
It is also simple to see that Assumption \ref{ass:rank_condition} can be verified from data and that it holds even with noisy ones, whenever the noise is sufficiently small in 
magnitude. 

\begin{proposition}[Data-driven closed-loop representation]
\label{prop:DDCL_representation}
	Let Assumption \ref{ass:rank_condition} be satisfied.
	For any matrices $K_x,K_r$, the closed-loop dynamics resulting from
	the control law \eqref{eq:noisy_law} can be equivalently expressed 
	in terms of data as 
		\begin{equation} \label{eq:DD_CL}
		x^{\mathrm{o}}(t+1)=
		{A}_{cl}x^{\mathrm{o}}(t)+{B}_{cl}r(t)+{D}_{cl}v(t)
		\end{equation}
		with ${A}_{cl}=(X_{1,T}+W_{0,T})G^{x}$, 
		${B}_{cl}=(X_{1,T}+W_{0,T})G^{r}$ and 
		${D}_{cl}=(X_{1,T}+W_{0,T})G^{v}$
	where the matrices
	$G^{x}, G^{v}$ and $G^{r}$ 
	satisfy the equalities
	 \begin{align}
	 &\begin{bmatrix}
	 K_{x} \\ I_{n}
	 \end{bmatrix}=\begin{bmatrix}
	 U_{0,T-1}\\ X_{0,T-1}
	 \end{bmatrix}G^{x}, \label{eq:Gx_cond}\\
	 &\begin{bmatrix}
	 K_{r} \\ \mathbf{0}_{n}
	 \end{bmatrix}=\begin{bmatrix}
	 U_{0,T-1}\\ X_{0,T-1}
	 \end{bmatrix}G^{r}, \label{eq:Gr_cond}\\
	 &\begin{bmatrix}
	 K_{x} \\ \mathbf{0}_{n}
	 \end{bmatrix}=\begin{bmatrix}
	 U_{0,T-1}\\ X_{0,T-1} 
	 \end{bmatrix}G^{v}, \label{eq:Gv_cond}
	 \end{align}
	 and where 
	 $W_{0,T}=AV_{0,T-1}-V_{1,T}$.
%
	\hfill $\square$
\end{proposition}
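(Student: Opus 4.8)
The plan is to verify \eqref{eq:DD_CL} by re-deriving, on one side, the closed-loop dynamics in terms of the (unknown) plant matrices, and on the other side a purely data-based identity for the matrix $X_{1,T}+W_{0,T}$, and then matching the two through \eqref{eq:Gx_cond}--\eqref{eq:Gv_cond}.

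First I would substitute the control law \eqref{eq:noisy_law} into \eqref{eq:noiseless_model}. Using $x(t)=x^{\mathrm o}(t)+v(t)$, this gives
$$x^{\mathrm o}(t+1)=Ax^{\mathrm o}(t)+B\big(K_x x(t)+K_r r(t)\big)=(A+BK_x)x^{\mathrm o}(t)+BK_r\,r(t)+BK_x\,v(t),$$
so it is enough to show that $(X_{1,T}+W_{0,T})G^x=A+BK_x$, $(X_{1,T}+W_{0,T})G^r=BK_r$ and $(X_{1,T}+W_{0,T})G^v=BK_x$.

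Next I would establish the data identity. Writing the state equation column by column with $x^{\mathrm o}(t)=x(t)-v(t)$ yields $X_{1,T}^{\mathrm o}=AX_{0,T-1}+BU_{0,T-1}-AV_{0,T-1}$; combining this with $X_{1,T}=X_{1,T}^{\mathrm o}+V_{1,T}$ and the definition $W_{0,T}=AV_{0,T-1}-V_{1,T}$ gives
$$X_{1,T}+W_{0,T}=AX_{0,T-1}+BU_{0,T-1}=\begin{bmatrix}B&A\end{bmatrix}\begin{bmatrix}U_{0,T-1}\\ X_{0,T-1}\end{bmatrix}.$$
At this point I would invoke Assumption~\ref{ass:rank_condition}: since $\begin{bmatrix}U_{0,T-1}^{\!\top}&X_{0,T-1}^{\!\top}\end{bmatrix}^{\!\top}$ has full row rank $n+m$, each right-hand side in \eqref{eq:Gx_cond}--\eqref{eq:Gv_cond} belongs to its column space, hence matrices $G^x,G^r,G^v$ solving those equalities exist for \emph{any} gains $K_x,K_r$ (they need not be unique, but this is immaterial). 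Right-multiplying the displayed identity by $G^x$ and using \eqref{eq:Gx_cond},
$$(X_{1,T}+W_{0,T})G^x=\begin{bmatrix}B&A\end{bmatrix}\begin{bmatrix}K_x\\ I_n\end{bmatrix}=A+BK_x,$$
and in the same way $(X_{1,T}+W_{0,T})G^r=\begin{bmatrix}B&A\end{bmatrix}\begin{bmatrix}K_r\\ \mathbf 0_n\end{bmatrix}=BK_r$ and $(X_{1,T}+W_{0,T})G^v=\begin{bmatrix}B&A\end{bmatrix}\begin{bmatrix}K_x\\ \mathbf 0_n\end{bmatrix}=BK_x$, which together with the first display is exactly \eqref{eq:DD_CL}.

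I do not expect a genuine obstacle here; the proof is essentially bookkeeping. The two points deserving care are: (i) guaranteeing the existence of $G^x,G^r,G^v$ and noting that the \emph{products} $(X_{1,T}+W_{0,T})G^\bullet$ are independent of the particular solution chosen, which is precisely what the rank condition in Assumption~\ref{ass:rank_condition} delivers; and (ii) correctly tracking the noise terms so that the additive correction $W_{0,T}=AV_{0,T-1}-V_{1,T}$ exactly compensates both the $V_{1,T}$ appearing in $X_{1,T}=X_{1,T}^{\mathrm o}+V_{1,T}$ and the $AV_{0,T-1}$ stemming from $x^{\mathrm o}(t)=x(t)-v(t)$, leaving the clean relation $X_{1,T}+W_{0,T}=AX_{0,T-1}+BU_{0,T-1}$.
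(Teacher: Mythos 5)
Your proposal is correct and follows essentially the same route as the paper: derive the closed-loop dynamics $(A+BK_x)x^{\mathrm o}+BK_r r+BK_x v$, use the rank condition (Rouch\'e--Capelli) to guarantee existence of $G^x,G^r,G^v$, and track the noise terms so that $W_{0,T}$ converts the noisy data matrices into $\begin{bmatrix}B & A\end{bmatrix}\begin{bmatrix}U_{0,T-1}\\ X_{0,T-1}\end{bmatrix}$. The only cosmetic difference is that you establish the identity $X_{1,T}+W_{0,T}=AX_{0,T-1}+BU_{0,T-1}$ up front and then right-multiply by $G^\bullet$, whereas the paper performs the same noise bookkeeping after multiplying by $G^\bullet$; the content is identical.
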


\begin{proof}
	By combining the dynamics in \eqref{eq:noiseless_model} with
	\eqref{eq:noisy_law}, 
	the closed-loop dynamics is given by
	\begin{equation}\label{eq:noisy_CL}
	x^{\mathrm{o}}(t+1)=(A+BK_{x})x^{\mathrm{o}}(t)+BK_{r}r(t)+BK_{x}v(t),
	\end{equation} 
	which can be equivalently written as:
	\begin{equation}\label{eq:equivalent_CL}
	x^{\mathrm{o}}(t+1)\!=\!\begin{bmatrix}
	B \!&\! A
	\end{bmatrix}\left\{
	\begin{bmatrix}
	K_{x} \\ I_{n}
	\end{bmatrix}\!x^{\mathrm{o}}(t)\!+\!\begin{bmatrix}
	K_{r} \\ \mathbf{0}_{n}
	\end{bmatrix}\!r(t)\!+\!\begin{bmatrix}
	K_{x} \\ \mathbf{0}_{n}
	\end{bmatrix}\!v(t)
	\right\}\!.
	\end{equation}
	Since the rank condition in \eqref{eq:rank_condition} holds, 
	by the Rouch\'e-Capelli theorem there exist three matrices 
	$G^{x}, G^{r}, G^{v} \in \mathbb{R}^{T \times n}$ 
	such that \eqref{eq:Gx_cond}-\eqref{eq:Gv_cond} are satisfied. 
	Therefore, the following equalities hold:
	\begin{subequations}\label{eq:DD_equalities1}
		\begin{align}
		\nonumber \begin{bmatrix}
		B \!\!&\!\! A
		\end{bmatrix}\begin{bmatrix}
		K_{x} \\ I_{n}
		\end{bmatrix} & \!=\! \begin{bmatrix}
		B \!\!&\!\! A
		\end{bmatrix}\begin{bmatrix}	
		U_{0,T-1}\\ X_{0,T-1}
		\end{bmatrix}G^{x}\\
		\nonumber & \!=\!\begin{bmatrix}
		B \!\!&\!\! A
		\end{bmatrix}\begin{bmatrix}	
		U_{0,T-1}\\ X_{0,T-1}^{\mathrm{o}}\!+\!V_{0,T-1}
		\end{bmatrix}G^{x}\\
		&\!=\!\begin{bmatrix}
		B \!\!&\!\! A
		\end{bmatrix}\begin{bmatrix}	
		U_{0,T-1}\\ X_{0,T-1}^{\mathrm{o}}
		\end{bmatrix}G^{x}\!+\!\!AV_{0,T\!-\!1}G^{x}, 
		\label{eq:DD_equalityx1}\\
		 \begin{bmatrix}
		B \!\!&\!\! A
		\end{bmatrix}\begin{bmatrix}
		K_{r} \\ \mathbf{0}_{n}
		\end{bmatrix} &\!=\! \begin{bmatrix}
		B \!\!&\!\! A
		\end{bmatrix}\begin{bmatrix}	
		U_{0,T-1}\\ X_{0,T-1}^{\mathrm{o}}
		\end{bmatrix}G^{r}\!+\!\!AV_{0,T-1}G^{r}, 		
		\label{eq:DD_equalityr1}\\
	    \begin{bmatrix}
		B \!\!&\!\! A
		\end{bmatrix}\begin{bmatrix}
		K_{x} \\ \mathbf{0}_{n}
		\end{bmatrix} &\!=\! \begin{bmatrix}
		B \!\!&\!\! A
		\end{bmatrix}\begin{bmatrix}	
		U_{0,T-1}\\ X_{0,T-1}^{\mathrm{o}}
		\end{bmatrix}G^{v}\!+\!\!AV_{0,T\!-\!1}G^{v}. 		
		\label{eq:DD_equalityv1}
		\end{align} 
	\end{subequations}
	Because of the dynamics in \eqref{eq:noiseless_model}, 
	it straightforwardly follows that 
	\begin{equation*}
	\begin{bmatrix}
	B \!&\! A
	\end{bmatrix}\begin{bmatrix}	
	U_{0,T-1}\\ X_{0,T-1}^{\mathrm{o}}
	\end{bmatrix}=X_{1,T}^{\mathrm{o}},
	\end{equation*}
	which, in turn, implies that \eqref{eq:DD_equalities1} is equivalent to:
	\begin{subequations}\label{eq:DD_equalities2}
		\begin{align}
	 \begin{bmatrix}
	B \!&\! A
	\end{bmatrix}\begin{bmatrix}
	K_{x} \\ I_{n}
	\end{bmatrix} & \!=\! X_{1,T}^{\mathrm{o}}G^{x}\!+
	\!AV_{0,T\!-\!1}G^{x}, \label{eq:DD_equalityx2}\\
	\begin{bmatrix}
	B \!&\! A
	\end{bmatrix}\begin{bmatrix}
	K_{r} \\ \mathbf{0}_{n}
	\end{bmatrix} &\!=\! X_{1,T}^{\mathrm{o}}G^{r}\!+\!AV_{0,T-1}G^{r}, 		\label{eq:DD_equalityr2}\\
	\begin{bmatrix}
	B \!&\! A
	\end{bmatrix}\begin{bmatrix}
	K_{x} \\ \mathbf{0}_{n}
	\end{bmatrix} &\!=\! X_{1,T}^{\mathrm{o}}G^{v}\!+\!AV_{0,T\!-\!1}G^{v}. 		\label{eq:DD_equalityv2}
	\end{align} 
	\end{subequations}	
Finally, since $X_{1,T}=X_{1,T}^{\mathrm{o}}+V_{1,T}$ we obtain
	\begin{subequations}\label{eq:DD_equalities3}
		\begin{align}
		\begin{bmatrix}
		B \!&\! A
		\end{bmatrix}\begin{bmatrix}
		K_{x} \\ I_{n}
		\end{bmatrix} & \!=\! \left(\!X_{1,T}+W_{0,T}\!\right)G^{x}, \label{eq:DD_equalityx3}\\
		\begin{bmatrix}
		B \!&\! A
		\end{bmatrix}\begin{bmatrix}
		K_{r} \\ \mathbf{0}_{n}
		\end{bmatrix} &\!=\! \left(\!X_{1,T}+W_{0,T}\!\right)G^{r}, 		\label{eq:DD_equalityr3}\\
		\begin{bmatrix}
		B \!&\! A
		\end{bmatrix}\begin{bmatrix}
		K_{x} \\ \mathbf{0}_{n}
		\end{bmatrix} &\!=\! \left(\!X_{1,T}+W_{0,T}\!\right)G^{v}, 		\label{eq:DD_equalityv3}
		\end{align} 
	\end{subequations}
	 which easily leads to the data-based representation in \eqref{eq:DD_CL}. 
\end{proof}


\begin{remark}[Relaxing Assumption \ref{ass:rank_condition}] \label{rem:relaxing}
	Having $X_{0,T-1}$ full row rank is necessary to have \eqref{eq:Gx_cond}-\eqref{eq:Gv_cond} fulfilled. In contrast, for some $K_x$ and $K_r$, \eqref{eq:Gx_cond}-\eqref{eq:Gv_cond} might have a solution even when $U_{0,T-1}$ is not full row rank, in line with what has been shown in \cite{vanwaarde2020}. Nonetheless, Assumption \ref{ass:rank_condition} ensures that the data-based representation of 
	Proposition \ref{prop:DDCL_representation} is valid for \emph{any} 
	control matrices $K_x,K_r$.
	\hfill $\blacksquare$ 
\end{remark}

By Proposition \ref{prop:DDCL_representation}, the design problem
can thus be cast in a data-driven fashion as 
 \begin{subequations}\label{eq:DD_matching}
 	\begin{align}
 	&(X_{1,T}+W_{0,T})G^{x}=A_{M}, \label{eq:DD_matchingx}\\
 	&(X_{1,T}+W_{0,T})G^{r}=B_{M}, \label{eq:DD_matchingr} 
 	\end{align}
 which have to be paired with the two consistency conditions 
 \begin{align}
 &X_{0,T-1}G^{x}=I_{n}, \label{eq:DD_matchingx_consistency} \\
 &X_{0,T-1}G^{r}=\mathbf{0}_{n}, \label{eq:DD_matchingr_consistency}
 \end{align}
 \end{subequations}
needed for the constraints \eqref{eq:DD_matchingx} and 
\eqref{eq:DD_matchingr} to be equivalent 
to the model-based ones in \eqref{eq:MB_matching}. If the matching problem is feasible, $K_x$ and $K_r$  can be retrieved a posteriori using the relations $K_x=U_{0,T-1}G^{x}$ and $K_r=U_{0,T-1}G^{r}$, while $G^{v}$ is not explicitly needed for computing $K_x$ and $K_r$. We summarize this fact in the following result.
%
\begin{theorem} \label{thm:equivalence}
Consider system \eqref{eq:noiseless_model} along with a 
reference model as in \eqref{eq:desired_behavior}. Suppose that 
Assumption \ref{ass:rank_condition} holds. 
Then, the matching problem is feasible, namely there exist two control
matrices $K_x$ and $K_r$ satisfying \eqref{eq:MB_matching}, if and only if there exist
matrices $G^{x}$ and $G^{r}$ such that
\eqref{eq:DD_matching} holds.
In such a case, $K_x$ and $K_r$ are given by $K_x=U_{0,T-1}G^{x}$
and $K_r=U_{0,T-1}G^{r}$, respectively. \hfill $\blacksquare$ 
\end{theorem}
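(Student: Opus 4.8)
The plan is to obtain Theorem~\ref{thm:equivalence} essentially as a corollary of Proposition~\ref{prop:DDCL_representation}, the workhorse being the identity $\begin{bmatrix} B & A\end{bmatrix}\smallmat{U_{0,T-1}\\ X_{0,T-1}} = X_{1,T}+W_{0,T}$, which is exactly what the chain of equalities in the proof of Proposition~\ref{prop:DDCL_representation} establishes (combine $X_{1,T}=X_{1,T}^{\mathrm{o}}+V_{1,T}$, $W_{0,T}=AV_{0,T-1}-V_{1,T}$ and $BU_{0,T-1}+AX_{0,T-1}^{\mathrm{o}}=X_{1,T}^{\mathrm{o}}$, so that $\begin{bmatrix} B & A\end{bmatrix}\smallmat{U_{0,T-1}\\ X_{0,T-1}}=X_{1,T}^{\mathrm{o}}+AV_{0,T-1}=X_{1,T}+W_{0,T}$). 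The proof then amounts to reading off, separately, the top and bottom block-rows of the relation $\smallmat{K\\ L}=\smallmat{U_{0,T-1}\\ X_{0,T-1}}G$.

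First I would prove the ``only if'' direction. Assume the matching problem is feasible, i.e. there exist $K_x,K_r$ with $A+BK_x=A_M$ and $BK_r=B_M$. Under Assumption~\ref{ass:rank_condition}, the Rouch\'e--Capelli theorem yields $G^x,G^r$ satisfying \eqref{eq:Gx_cond}--\eqref{eq:Gr_cond}, i.e. $\smallmat{K_x\\ I_n}=\smallmat{U_{0,T-1}\\ X_{0,T-1}}G^x$ and $\smallmat{K_r\\ \mathbf{0}_n}=\smallmat{U_{0,T-1}\\ X_{0,T-1}}G^r$. Reading off the bottom blocks gives the consistency conditions \eqref{eq:DD_matchingx_consistency}--\eqref{eq:DD_matchingr_consistency}, while left-multiplying by $\begin{bmatrix} B & A\end{bmatrix}$ and using the identity above gives $(X_{1,T}+W_{0,T})G^x=A+BK_x=A_M$ and $(X_{1,T}+W_{0,T})G^r=BK_r=B_M$, which are \eqref{eq:DD_matchingx}--\eqref{eq:DD_matchingr}.

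Then I would prove the ``if'' direction together with the retrieval formulas. Given $G^x,G^r$ satisfying \eqref{eq:DD_matching}, set $K_x:=U_{0,T-1}G^x$ and $K_r:=U_{0,T-1}G^r$. The consistency conditions \eqref{eq:DD_matchingx_consistency}--\eqref{eq:DD_matchingr_consistency} state precisely that $\smallmat{U_{0,T-1}\\ X_{0,T-1}}G^x=\smallmat{K_x\\ I_n}$ and $\smallmat{U_{0,T-1}\\ X_{0,T-1}}G^r=\smallmat{K_r\\ \mathbf{0}_n}$, so these $G^x,G^r$ are admissible in the sense of \eqref{eq:Gx_cond}--\eqref{eq:Gr_cond}. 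Left-multiplying by $\begin{bmatrix} B & A\end{bmatrix}$ and invoking the same identity yields $A+BK_x=(X_{1,T}+W_{0,T})G^x=A_M$ and $BK_r=(X_{1,T}+W_{0,T})G^r=B_M$, so $(K_x,K_r)$ solves Problem~\ref{pb:MMstatic_problem}; this simultaneously establishes feasibility and the retrieval formulas $K_x=U_{0,T-1}G^x$, $K_r=U_{0,T-1}G^r$.

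I do not expect a genuine obstacle here: the statement is a repackaging of Proposition~\ref{prop:DDCL_representation}. The only point that needs care is keeping the equivalence airtight in both directions — specifically verifying that the consistency conditions are exactly what forces the bottom blocks of $\smallmat{U_{0,T-1}\\ X_{0,T-1}}G^x$ and $\smallmat{U_{0,T-1}\\ X_{0,T-1}}G^r$ to equal $I_n$ and $\mathbf{0}_n$, so that the data-based problem \eqref{eq:DD_matching} reduces to \eqref{eq:MB_matching} and not merely to a relaxation of it. I would also add a short remark that $G^x,G^r$ need not be unique (and $G^v$ plays no role in the retrieval), yet any solution of \eqref{eq:DD_matching} yields a valid controller, and that $X_{0,T-1}$ having full row rank — implied by Assumption~\ref{ass:rank_condition} — is what makes \eqref{eq:DD_matchingx_consistency} solvable at all.
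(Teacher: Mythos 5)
Your proposal is correct and follows essentially the same route as the paper, which presents Theorem~\ref{thm:equivalence} as a direct summary of Proposition~\ref{prop:DDCL_representation} and the discussion around \eqref{eq:DD_matching}: the key identity $\begin{bmatrix} B & A\end{bmatrix}\smallmat{U_{0,T-1}\\ X_{0,T-1}} = X_{1,T}+W_{0,T}$ is exactly the chain of equalities \eqref{eq:DD_equalities1}--\eqref{eq:DD_equalities3}, and the two directions you spell out (Rouch\'e--Capelli for existence of $G^x,G^r$, and reading off the top and bottom block-rows against the consistency conditions) are the intended argument. Your added observations on non-uniqueness of $G^x,G^r$ and on the necessity of $X_{0,T-1}$ having full row rank match Remark~\ref{rem:relaxing}.
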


\section{Data-driven model matching}\label{Sec:matching_prob} 
Theorem \ref{thm:equivalence} provides an equivalent data-based formulation of the 
model-based matching problem without any explicit identification step. When the data 
are noiseless, the conditions in \eqref{eq:DD_matching} reduce to 
\begin{subequations}\label{eq:DD_matching_noiseless}
	\begin{align}
	&X_{1,T}G^{x}=A_{M}, \label{eq:DD_matching_noiselessAm}\\
	&X_{1,T}G^{r}=B_{M}, \label{eq:DD_matching_noiselessBm} \\
	&X_{0,T-1}G^{x}=I_{n}, \label{eq:DD_matching_noiselessGx} \\
	&X_{0,T-1}G^{r}=\mathbf{0}_{n}. \label{eq:DD_matching_noiselessGr}
	\end{align}
\end{subequations}
Therefore, in this case, Theorem \ref{thm:equivalence} gives a complete and easy-to-implement data-based solution for the matching problem. We summarize this result next, showing that Problems~\ref{pb:MMstatic_problem} and \ref{pb:DDCstatic_problem} are indeed equivalent with noiseless data satisfying Assumption \ref{ass:rank_condition}.
%
%
\begin{theorem} \label{thm:noiseless}
Consider system \eqref{eq:noiseless_model} along with a 
reference model as in \eqref{eq:desired_behavior}. Let 
Assumption \ref{ass:rank_condition} hold for a set of noise-free data. Then, the
matching problem is feasible if and only if \eqref{eq:DD_matching_noiseless} is satisfied. In this case,
any solution $(G^{x},G^{r})$ is such that the control matrices $K_x=U_{0,T-1}G^{x}$ and $K_r=U_{0,T-1}G^{r}$ solve the matching problem. \hfill $\blacksquare$ 
\end{theorem}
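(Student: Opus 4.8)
The plan is to obtain Theorem~\ref{thm:noiseless} as the noise-free specialization of Theorem~\ref{thm:equivalence}, together with a short direct check that \emph{every} solution of \eqref{eq:DD_matching_noiseless} --- and not merely some particular one --- returns matching controllers.

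First I would note that for noise-free data $V_{0,T-1}=0$ and $V_{1,T}=0$, so that $W_{0,T}=AV_{0,T-1}-V_{1,T}=0$ and $X_{0,T-1}=X_{0,T-1}^{\mathrm{o}}$, $X_{1,T}=X_{1,T}^{\mathrm{o}}$. With these substitutions the data-driven matching conditions \eqref{eq:DD_matching} reduce verbatim to \eqref{eq:DD_matching_noiseless}, so the equivalence ``matching problem feasible $\iff$ \eqref{eq:DD_matching_noiseless} solvable'' is exactly the statement of Theorem~\ref{thm:equivalence} in this case. For the converse direction in particular, the existence of $G^{x},G^{r}$ satisfying \eqref{eq:DD_matching_noiseless} whenever the matching problem is feasible follows from the Rouch\'e--Capelli argument already used in the proof of Proposition~\ref{prop:DDCL_representation}, invoking Assumption~\ref{ass:rank_condition}.

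The only point requiring a touch more care is the ``any solution'' claim, since the consistency conditions \eqref{eq:DD_matching_noiselessGx}--\eqref{eq:DD_matching_noiselessGr} do not determine $G^{x},G^{r}$ uniquely. Here I would use the identity
\[
X_{1,T}=\begin{bmatrix} B & A \end{bmatrix}\begin{bmatrix} U_{0,T-1}\\ X_{0,T-1} \end{bmatrix},
\]
valid in the noise-free case by the plant dynamics \eqref{eq:noiseless_model} (as already observed in the proof of Proposition~\ref{prop:DDCL_representation}). Let $(G^{x},G^{r})$ be any solution of \eqref{eq:DD_matching_noiseless} and set $K_x:=U_{0,T-1}G^{x}$, $K_r:=U_{0,T-1}G^{r}$. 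Right-multiplying the identity by $G^{x}$ and using \eqref{eq:DD_matching_noiselessAm} and \eqref{eq:DD_matching_noiselessGx} gives
\[
A_M=X_{1,T}G^{x}=\begin{bmatrix} B & A \end{bmatrix}\begin{bmatrix} U_{0,T-1}G^{x}\\ X_{0,T-1}G^{x} \end{bmatrix}=\begin{bmatrix} B & A \end{bmatrix}\begin{bmatrix} K_x\\ I_n \end{bmatrix}=A+BK_x,
\]
and right-multiplying by $G^{r}$ and using \eqref{eq:DD_matching_noiselessBm} and \eqref{eq:DD_matching_noiselessGr} gives $B_M=X_{1,T}G^{r}=\begin{bmatrix} B & A \end{bmatrix}\begin{bmatrix} K_r\\ \mathbf{0}_n \end{bmatrix}=BK_r$. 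Thus \eqref{eq:MB_matching} holds for these $K_x,K_r$, independently of which solution was selected.

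I do not expect any real obstacle in this proof; it is essentially bookkeeping around setting $W_{0,T}=0$ and exploiting the block factorization of $X_{1,T}$. The one thing to be careful about is to spell out the ``any solution'' step explicitly, because Theorem~\ref{thm:equivalence} only guarantees the \emph{existence} of suitable $G^{x},G^{r}$, whereas here we additionally want all such matrices to be admissible.
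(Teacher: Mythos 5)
Your proposal is correct and follows essentially the same route as the paper, which presents Theorem~\ref{thm:noiseless} as the immediate noise-free specialization of Theorem~\ref{thm:equivalence} (with $W_{0,T}=0$ so that \eqref{eq:DD_matching} collapses to \eqref{eq:DD_matching_noiseless}) and gives no separate proof. Your explicit verification of the ``any solution'' claim via the factorization $X_{1,T}=\begin{bmatrix} B & A \end{bmatrix}\smallmat{U_{0,T-1}\\ X_{0,T-1}}$ is exactly the computation the paper leaves implicit, and it is sound.
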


\begin{remark}
In the noiseless case, any solution of the system of equations \eqref{eq:DD_matching_noiseless} ensures 
the stability of the closed-loop system. Indeed, matrix $A_M$ is stable by hypothesis and 
$X_{1,T}G^{x}=A+BK_x$.
\hfill $\blacksquare$ 
\end{remark}


\subsection{Handling model mismatch}

Theorem \ref{thm:noiseless} rests on the assumption that the matching 
problem is feasible. In case one has selected a reference model
$\mathcal M$ for which perfect matching is not possible, \eqref{eq:DD_matching_noiseless} 
will have no solutions. One way to remedy this situation is to modify 
\eqref{eq:DD_matching_noiseless} so as to search for a stabilizing controller that
best matches $\mathcal M$ in some suitable sense, as specified below.

An intuitive way for relaxing the matching constraints is 
to lift them to an objective function, recasting the matching problem as
\begin{eqnarray} \label{eq:relax0}
\begin{array}{rl}
\underset{G^x,G^r}{\text{minimize}} & 
\|{X}_{1,T}G^{x}\!-\!A_{M}\| + \lambda \|{X}_{1,T}G^{r}\!-\!B_{M}\| \\[0.1cm]
\text{subject to} & \eqref{eq:DD_matching_noiselessGx}, \eqref{eq:DD_matching_noiselessGr}
\end{array}
\end{eqnarray}
where 
$\lambda>0$ 
weights the relative 
importance between the two matching objectives and 
$\|\cdot\|$ is any norm. 

With this formulation, 
we have guarantees that the solution returns a stabilizing controller $K_x$ only
if the matching problem is feasible
. Nonetheless, even if the reference model is not perfectly matched, it is not difficult to incorporate a stability constraint in this new formulation.
As a first step, note that the condition \eqref{eq:DD_matching_noiseless} can be equivalently written as
\begin{subequations}\label{eq:DD_matching_noiseless2}
	\begin{align}
	&X_{1,T}Q^{x}=A_{M} P, \label{eq:DD_matchinq_noiselessAm2}\\
	&X_{1,T} Q^{r}=B_{M} P, \label{eq:DD_matchinq_noiselessBm2}\\
	 &X_{0,T-1}Q^{x}=P, \label{eq:DD_matching_noiselessQx}\\
	&X_{0,T-1}Q^{r}=\mathbf{0}_{n}, \label{eq:DD_matching_noiselessQr}
	\end{align}
\end{subequations}
having defined $Q^x=G^xP$  and $Q^r=G^rP$, where $P \succ 0$ but otherwise arbitrary.
As a second step, recall that in the noiseless case $X_1 G^x = A+BK_x$, which can be rewritten as $X_1 Q^x P^{-1} = A+BK_x$. Hence, the closed-loop system with feedback controller $K_x$ is stable 
if and only if there exists a matrix $P \succ 0$ that satisfies the Lyapunov inequality
$X_{1,T}Q^{x}P^{-1} (X_{1,T}Q^{x})'-P\prec 0$ which, in turn, can be rewritten as
\begin{equation}\label{eq:LMI_stabcond}
\begin{bmatrix}
P & X_{1,T}Q^{x}\\
(X_{1,T} Q^{x})' & P
\end{bmatrix} \succ \mathbf{0}_{2\cdot n},
\end{equation}
by using Schur complement. This immediately leads to 
the following formulation:
\begin{eqnarray} \label{eq:relax}
\begin{array}{rl}
\displaystyle \underset{Q^x,Q^r,P}{\text{minimize}} & 
\|{X}_{1,T}Q^{x}\!-\!A_{M} P\| + \lambda \|{X}_{1,T}Q^{r}\!-\!B_{M}P\| \\[0.1cm]
\text{subject to} & \eqref{eq:DD_matching_noiselessQx}, \eqref{eq:DD_matching_noiselessQr},
\eqref{eq:LMI_stabcond}
\end{array}
\end{eqnarray}
where 
$\lambda>0$. Note that \eqref{eq:LMI_stabcond} ensures $P \succ 0$. Compared with \eqref{eq:relax0}, this new formulation is such that any solution returns a stabilizing controller. We remark that the reason to work with $Q^x,Q^r$ instead of $G^x,G^r$ is to retrieve a formulation that can be efficiently solved numerically. Indeed, by using $Q^x,Q^r$, 
the matching problem in \eqref{eq:relax} 
corresponds to a \emph{semi-definite program}, that can be efficiently handled by many existing solvers. The specific properties of this formulation are summarized in the next theorem.
%
%
%
%
%
\begin{theorem} \label{thm:mismatch}
Consider system \eqref{eq:noiseless_model} along with a 
reference model as in \eqref{eq:desired_behavior}. Let the data be gathered over a noise-free experiment 
and let Assumption \ref{ass:rank_condition} hold. Then:
\begin{enumerate}
\item[(i)] If there exists a stabilizing static state-feedback linear controller for \eqref{eq:noiseless_model}, 
the program \eqref{eq:relax}
is feasible and any solution $(Q^{x},Q^{r},P)$ is such that 
$K_x=U_{0,T-1}Q^{x}P^{-1}$ ensures closed-loop stability.
\item[(ii)] If the matching problem is feasible, the program \eqref{eq:relax}
is also feasible and any solution $(Q^{x},Q^{r},P)$ is such that 
$K_x=U_{0,T-1}Q^{x}P^{-1}$ and $K_r=U_{0,T-1}Q^{r} P^{-1}$
solve the matching problem.
\end{enumerate}
\end{theorem}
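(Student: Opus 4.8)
The plan is to handle (i) and (ii) with a single template, the same one behind Theorem~\ref{thm:noiseless}: first I would build an explicit admissible triple for the semidefinite program \eqref{eq:relax} out of a known controller, so that its feasible set is nonempty; then I would show that the claimed guarantee follows from the \emph{constraints alone} in case (i), and from the \emph{optimal value being zero} in case (ii). The workhorse throughout is the noise-free identity $X_{1,T}=B\,U_{0,T-1}+A\,X_{0,T-1}$, i.e.\ the state equation of \eqref{eq:noiseless_model} read columnwise once $X_{0,T-1}=X_{0,T-1}^{\mathrm o}$ and $X_{1,T}=X_{1,T}^{\mathrm o}$, exactly as in the proof of Proposition~\ref{prop:DDCL_representation}.

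For (i): let $\bar K$ stabilize $(A,B)$. Using Assumption~\ref{ass:rank_condition} (Rouch\'e-Capelli, as in Proposition~\ref{prop:DDCL_representation}) I pick $\bar G^x$ with $X_{0,T-1}\bar G^x=I_n$ and $U_{0,T-1}\bar G^x=\bar K$, so $X_{1,T}\bar G^x=A+B\bar K$ is Schur; hence some $\bar P\succ0$ satisfies $(A+B\bar K)\bar P(A+B\bar K)'\prec\bar P$. Taking $\bar Q^x:=\bar G^x\bar P$ and $\bar Q^r:=0$, constraints \eqref{eq:DD_matching_noiselessQx}--\eqref{eq:DD_matching_noiselessQr} hold, and $X_{1,T}\bar Q^x=(A+B\bar K)\bar P$ turns \eqref{eq:LMI_stabcond} into precisely the Schur complement of that Lyapunov inequality, so \eqref{eq:relax} is feasible. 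Conversely, for \emph{any} feasible $(Q^x,Q^r,P)$, \eqref{eq:LMI_stabcond} forces $P\succ0$ and $X_{1,T}Q^xP^{-1}(X_{1,T}Q^x)'\prec P$; writing $G^x:=Q^xP^{-1}$, constraint \eqref{eq:DD_matching_noiselessQx} gives $X_{0,T-1}G^x=I_n$, whence by the noise-free identity $X_{1,T}G^x=A+BK_x$ with $K_x:=U_{0,T-1}G^x$, and therefore $X_{1,T}Q^xP^{-1}(X_{1,T}Q^x)'=(A+BK_x)P(A+BK_x)'\prec P$ with $P\succ0$; so $A+BK_x$ is Schur, i.e.\ the retrieved $K_x$ is stabilizing.

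For (ii): from a matching pair $A+BK_x^\star=A_M$, $BK_r^\star=B_M$ I obtain $G^x,G^r$ with $X_{0,T-1}G^x=I_n$, $X_{0,T-1}G^r=\mathbf{0}_n$, $X_{1,T}G^x=A_M$, $X_{1,T}G^r=B_M$; since $\mathcal M$ is stable, $A_M$ is Schur, so some $P\succ0$ satisfies $A_MPA_M'\prec P$. Then $Q^x:=G^xP$, $Q^r:=G^rP$ is feasible for \eqref{eq:relax} and drives its objective to zero (reading the feed-forward residual as $\|X_{1,T}Q^r-B_MP\|$, consistently with the scaling $Q^r=G^rP$; see the last paragraph). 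Hence \eqref{eq:relax} is feasible with optimal value $0$, so every minimizer $(Q^x,Q^r,P)$ satisfies $X_{1,T}Q^x=A_MP$ and $X_{1,T}Q^r=B_MP$. Arguing as in (i) with $G^x:=Q^xP^{-1}$, $G^r:=Q^rP^{-1}$ gives $X_{1,T}Q^x=(A+BK_x)P$ and $X_{1,T}Q^r=BK_rP$ for $K_x:=U_{0,T-1}Q^xP^{-1}$, $K_r:=U_{0,T-1}Q^rP^{-1}$; cancelling the invertible $P$ yields $A+BK_x=A_M$ and $BK_r=B_M$, while $K_x$ is stabilizing by part (i) since \eqref{eq:LMI_stabcond} still holds (consistent with $A_M$ being Schur).

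I expect the obstacles to be bookkeeping rather than conceptual. The main point to get right is that the converse direction in (i) must go through for \emph{every feasible} triple, not just optimal ones, which hinges on the strictness of \eqref{eq:LMI_stabcond} and on the change of variables $Q^x=G^xP$ making $X_{0,T-1}G^x=I_n$ hold verbatim. Two smaller issues I would pin down before finalizing: the feasible set of \eqref{eq:relax} is open (strict LMI), so a minimizer need not exist in~(i), and there ``solution'' should be understood as ``any feasible point'', which is harmless since stability already follows from feasibility; and the normalization of the feed-forward term --- the argument for (ii) needs the residual to read $\|X_{1,T}Q^r-B_MP\|$ so that a vanishing residual is equivalent to $BK_r=B_M$ after the $P^{-1}$-retrieval, whereas a residual against $B_M$ alone would only give $BK_r=B_MP^{-1}$; this should be reconciled with the statement of \eqref{eq:relax}.
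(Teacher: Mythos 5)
Your proof follows essentially the same route as the paper's: construct a feasible triple $(G^xP,\,G^rP,\,P)$ from a known stabilizing (resp.\ matching) controller via Assumption~\ref{ass:rank_condition}, then read closed-loop stability off the constraints alone for (i) and off the zero optimal value for (ii), using $X_{1,T}Q^xP^{-1}=A+BK_x$ in the noise-free case. Your two side remarks are also well taken: the paper's argument for (ii) indeed tacitly reads the feed-forward residual as $\|X_{1,T}Q^r-B_MP\|$ (consistent with \eqref{eq:DD_matchinq_noiselessBm2} and \eqref{eq:relax1}, so the missing $P$ in the objective of \eqref{eq:relax} is a typo), and the paper's proof of (i) likewise uses only feasibility of the triple, exactly as you suggest.
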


\begin{proof}
(i) Let $\underline K_x$ be any stabilizing controller
and let $\underline K_r$ be an arbitrary matrix of dimension $m \times n$.
By Assumption \ref{ass:rank_condition}, there exists a matrix $G^x$ such that 
$X_{0,T-1}G^x=I_n$ and $U_{0,T-1}G^x=\underline K_x$. This implies
$A+B \underline K_x = X_{1,T}G^x$. Since $\underline K_x$ is stabilizing, there exists a matrix $P \succ 0$ such that $X_{1,T}G^{x}P (X_{1,T}G^{x})'-P\prec 0$. Further, there exists a matrix $G^r$ such that $X_{0,T-1}G^r=\mathbf{0}_n$ and $U_{0,T-1}G^r=\underline K_r$. Hence, $Q^x=G^x P$ and $Q^r=G^r P$ ensure the fulfillment of \eqref{eq:DD_matching_noiselessQx}, \eqref{eq:DD_matching_noiselessQr} and \eqref{eq:LMI_stabcond}, meaning that \eqref{eq:relax} is feasible. Let now $Q^x,Q^r,P\succ0$ be any solution to \eqref{eq:relax}. 
In view of the constraint \eqref{eq:DD_matching_noiselessQx} and
because $K_x=U_{0,T-1}Q^{x}P^{-1}$, we have 
$X_{1,T}Q^{x}P^{-1} = (A X_{0,T-1} + B U_{0,T-1})Q^{x}P^{-1} = A+B K_x$ and, thus, \eqref{eq:LMI_stabcond} ensures that $K_x$ is stabilizing. (ii) The proof of the 
second part of the theorem follows directly from the previous point and the fact that, in this case, the minimum of the objective function is zero. 
\end{proof}
\begin{remark}[Accounting for DC gains] The cost in \eqref{eq:relax} can me modified to account for other properties. As an example, we can add a term in the cost function penalizing $\|P-X_{1,T}Q^{x}-X_{1,T}Q^{r}\|$ to steer the closed-loop DC gain to be unitary, if $\mathcal{M}$ has been chosen so as to have prefect tracking of step-like references. 
\hfill $\blacksquare$
\end{remark}

\section{Noise-aware data-based matching strategy}\label{Sec:noise}
With data corrupted by noise, the analysis gets sensibly more complex. Indeed, in this scenario, the solution to \eqref{eq:relax} might not exist or it may lead to a controller $K_x$ that is not stabilizing. In this section, we provide a first strategy to handle noisy data.
%

Suppose that the noise samples have zero-mean and are \emph{independent} and \emph{identically distributed} (i.i.d.). By the \emph{Strong Law of Large Numbers} it holds
\begin{equation}
\lim_{N \longrightarrow \infty} \frac{v(0)+v(1)+\ldots+v(N)}{N}=0.
\end{equation}
with probability 1. This simple, yet fundamental, property 
suggests that if we perform multiple experiments on the system 
and we then average the collected data, the effect of noise will 
become decreasingly marked as the number of experiment increases. 
In this light, suppose that we make $N$ experiments on system 
\eqref{eq:noiseless_model}, each of length $T$. Let 
$(U_{0,T-1}^{(i)},X_{0,T-1}^{(i)},X_{1,T}^{(i)})$, $i=1,\ldots,N$, 
be the data matrices resulting from the $i$-th experiment and 
$(V_{0,T-1}^{(i)},V_{1,T}^{(i)})$ be the corresponding noise matrices. Denote with
\begin{eqnarray}
\bar S := \frac{1}{N} \sum_{i=1}^N S^{(i)} \vspace{-.1cm}
\end{eqnarray} 
the average of $N$ matrices $S^{(i)}$\footnote{$S^{(i)}$ is used as a place-holder to indicate any data-based matrix contructed based on the $i$-th experiment.}. From a practical viewpoint and in light of the considerations made for the noiseless case, we can then cast the following optimization problem:
\begin{subequations} \label{eq:relax1}
\begin{align}
& \underset{Q^x,Q^r,P}{\text{minimize} } 
~~\|\bar {X}_{1,T}Q^{x}\!-\!A_{M} P\| \!+\! \lambda \|\bar {X}_{1,T}Q^{r}\!-\! B_{M} P\| \\
&\text{subject to }~~ \bar{X}_{0,T-1}Q^{x}=P, \label{eq:DD_matchingx4}\\
& \qquad \quad \quad~~~  \bar{X}_{0,T-1}Q^{r}=\mathbf{0}_{n}, \label{eq:DD_matchingr4}\\
&  \qquad \quad \quad~~~ 	\begin{bmatrix}
P & \bar{X}_{1,T}Q^{x}\\
\left(\bar{X}_{1,T}Q^{x}\right)' & P
\end{bmatrix} \succ \mathbf{0}_{2 n}, \label{eq:DD_stab_ave}
\end{align}
\end{subequations}	 
where $\lambda > 0$, and the two terms of the cost function are obtained by lifting the averaged matching conditions $\bar{X}_{1,T}Q^{x}=A_{M} P$ and $\bar{X}_{1,T}Q^{r}=B_{M}  P$. If a solution to \eqref{eq:relax1} is found, the control matrices are given by $K_x=\bar U_{0,T-1}Q^{x}P^{-1}$ and $K_r=\bar U_{0,T-1}Q^{r}P^{-1}$. We can term \eqref{eq:relax1} a \emph{certainty-equivalence} solution since the design is carried out as if the noise were zero, because of the use of averages. Note that, from \eqref{eq:relax1} we exactly recover Theorem \ref{thm:mismatch} as $N \rightarrow \infty$, since $\bar W_{0,T}=0$.
\begin{remark}[Features of the measurement noise]
	The assumption on the noise is not restrictive. Indeed, by properly detrending the measured states, one can satisfy the zero-mean hypothesis in practice. This preprocessing phase is shared by most identification techniques and state-of-the-art data-driven control approaches \cite{sysid}. \hfill $\blacksquare$ 
\end{remark}

Hereafter, we provide a stability result and some consideration related to the properties of \eqref{eq:relax1}. To this end, let us consider the following assumption.  
\begin{assumption} \label{ass:noise3}
	The data satisfy the following
	\begin{subequations}
		\begin{eqnarray} \label{ass:noise3_a}
		& \begin{bmatrix}
		\mathbf{0}_{m,T} \\  \bar V_{0, T-1}
		\end{bmatrix}
		\begin{bmatrix}
		\mathbf{0}_{m,T} \\  \bar V_{0, T-1}
		\end{bmatrix}'  \preceq \gamma_1 
		\begin{bmatrix}
		\bar U_{0,T-1} \\
		\bar X_{0, T-1}
		\end{bmatrix}
		\begin{bmatrix}
		\bar U_{0,T-1} \\
		\bar X_{0, T-1}
		\end{bmatrix}' \\[0.2cm]
		\label{ass:noise3_b}
		& \bar V_{1, T} \bar V_{1, T}' \preceq \gamma_2 \bar X_{1, T} \bar X_{1, T}'
		\end{eqnarray} 
		for some $\gamma_1 \in (0,0.5)$ and $\gamma_2 > 0$. 
	\end{subequations} \hfill $\square$
\end{assumption}
Since Assumption \ref{ass:noise3} requires the energy of the noise to be sufficiently small with respect to the one related to the data, it can be seen as a \emph{signal-to-noise ratio} condition. Under this hypothesis, we can provide sufficient conditions for closed-loop stability under noisy data as follows.

\begin{theorem} \label{thm:noisy}
	Consider system \eqref{eq:noiseless_model} and a reference model as in \eqref{eq:desired_behavior}. Suppose that Assumption \ref{ass:noise3} holds 
	and that \eqref{eq:relax1} is feasible. Let $(Q^x,Q^r,P)$ be any 
	solution and let $\alpha$ and 
	$\beta$ be positive constants such that
	\begin{equation}
	\Xi
	+\alpha \bar X_{1,T} (\bar X_{1,T})' \preceq 0, \quad M \preceq \beta I,
	\end{equation}
	where \vspace{-.1cm}
	\begin{eqnarray}\label{eq:DD_stab_average0}
	\Xi = \bar X_{1,T} M \bar X_{1,T}'-P, \quad
	M = Q^x P^{-1} (Q^x)'. \vspace{-.1cm}
	\end{eqnarray}
	If 
	\begin{eqnarray} \label{eq:stab_noise}
	\frac{6 \gamma_1+3\gamma_2}{1 - 2 \gamma_1} <
	\frac{\alpha^2}{2\beta (2\beta + \alpha)},
	\end{eqnarray}
	then $K_x= \bar U_{0,T-1}Q^{x}P^{-1}$ ensures closed-loop stability.
\end{theorem}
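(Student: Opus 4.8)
\medskip
\noindent\textbf{Proof sketch.} The plan is to show that $P$ itself certifies stability of the \emph{true} closed-loop matrix $A+BK_x$, i.e.\ that $(A+BK_x)P(A+BK_x)'\prec P$; since $P\succ0$, this is a discrete-time Lyapunov inequality and implies $A+BK_x$ is Schur.

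\emph{Step 1 (data representation of $A+BK_x$).} Using $K_x=\bar U_{0,T-1}Q^xP^{-1}$ together with the constraint \eqref{eq:DD_matchingx4}, namely $\bar X_{0,T-1}Q^x=P$, one gets $(A+BK_x)P=A\bar X_{0,T-1}Q^x+B\bar U_{0,T-1}Q^x=\begin{bmatrix}B & A\end{bmatrix}\smallmat{\bar U_{0,T-1}\\\bar X_{0,T-1}}Q^x$. Averaging the per-experiment identity $X_{1,T}^{\mathrm o,(i)}=\begin{bmatrix}B & A\end{bmatrix}\smallmat{U_{0,T-1}^{(i)}\\X_{0,T-1}^{\mathrm o,(i)}}$ and substituting $\bar X_{0,T-1}^{\mathrm o}=\bar X_{0,T-1}-\bar V_{0,T-1}$, $\bar X_{1,T}^{\mathrm o}=\bar X_{1,T}-\bar V_{1,T}$ shows $\begin{bmatrix}B & A\end{bmatrix}\smallmat{\bar U_{0,T-1}\\\bar X_{0,T-1}}=\bar X_{1,T}+\bar W_{0,T}$ with $\bar W_{0,T}=A\bar V_{0,T-1}-\bar V_{1,T}$. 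Hence $(A+BK_x)P=(\bar X_{1,T}+\bar W_{0,T})Q^x$ and, consequently, $(A+BK_x)P(A+BK_x)'=(\bar X_{1,T}+\bar W_{0,T})M(\bar X_{1,T}+\bar W_{0,T})'$ with $M$ as in \eqref{eq:DD_stab_average0}; the target becomes $(\bar X_{1,T}+\bar W_{0,T})M(\bar X_{1,T}+\bar W_{0,T})'\prec P$.

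\emph{Step 2 (signal-to-noise bound).} I would next establish the matrix inequality $\bar W_{0,T}\bar W_{0,T}'\preceq\rho\,\bar X_{1,T}\bar X_{1,T}'$ with $\rho=\frac{6\gamma_1+3\gamma_2}{1-2\gamma_1}$. Starting from Young's inequality $\bar W_{0,T}\bar W_{0,T}'\preceq(1+\eta)A\bar V_{0,T-1}\bar V_{0,T-1}'A'+(1+\eta^{-1})\bar V_{1,T}\bar V_{1,T}'$, the last term is bounded by $\gamma_2\,\bar X_{1,T}\bar X_{1,T}'$ via \eqref{ass:noise3_b}, while the first is rewritten as $\begin{bmatrix}B & A\end{bmatrix}\smallmat{\mathbf 0\\\bar V_{0,T-1}}\smallmat{\mathbf 0\\\bar V_{0,T-1}}'\begin{bmatrix}B & A\end{bmatrix}'$ and, by \eqref{ass:noise3_a}, is $\preceq\gamma_1(\bar X_{1,T}+\bar W_{0,T})(\bar X_{1,T}+\bar W_{0,T})'\preceq\gamma_1(1+\mu)\bar X_{1,T}\bar X_{1,T}'+\gamma_1(1+\mu^{-1})\bar W_{0,T}\bar W_{0,T}'$. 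Collecting terms, $\bigl(1-(1+\eta)(1+\mu^{-1})\gamma_1\bigr)\bar W_{0,T}\bar W_{0,T}'\preceq\bigl((1+\eta)(1+\mu)\gamma_1+(1+\eta^{-1})\gamma_2\bigr)\bar X_{1,T}\bar X_{1,T}'$; the choice $\eta=\tfrac12$, $\mu=3$ makes the three coefficients equal $2\gamma_1$, $6\gamma_1$, $3\gamma_2$, and since Assumption \ref{ass:noise3} enforces $\gamma_1<\tfrac12$, the factor $1-2\gamma_1>0$ can be divided out. This is the \textbf{main obstacle}: getting exactly these constants, and in particular exploiting the \emph{block} form of \eqref{ass:noise3_a} — the zero padding in the input rows is precisely what lets the left/right multiplication by $\begin{bmatrix}B & A\end{bmatrix}$ reproduce $\bar X_{1,T}+\bar W_{0,T}$ rather than an unusable expression in $A\bar X_{0,T-1}$.

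\emph{Step 3 (Lyapunov estimate).} Apply Young's inequality once more with a free parameter $\epsilon>0$: $(\bar X_{1,T}+\bar W_{0,T})M(\bar X_{1,T}+\bar W_{0,T})'\preceq(1+\epsilon)\bar X_{1,T}M\bar X_{1,T}'+(1+\epsilon^{-1})\bar W_{0,T}M\bar W_{0,T}'$. Since $\bar X_{1,T}M\bar X_{1,T}'=\Xi+P$ by the definition of $\Xi$, subtracting $P$ and then using $M\preceq\beta I$ (so $\bar X_{1,T}M\bar X_{1,T}'\preceq\beta\,\bar X_{1,T}\bar X_{1,T}'$ and, with Step 2, $\bar W_{0,T}M\bar W_{0,T}'\preceq\beta\rho\,\bar X_{1,T}\bar X_{1,T}'$) together with $\Xi\preceq-\alpha\,\bar X_{1,T}\bar X_{1,T}'$ gives $(A+BK_x)P(A+BK_x)'-P\preceq\bigl(-\alpha+\epsilon\beta+(1+\epsilon^{-1})\beta\rho\bigr)\bar X_{1,T}\bar X_{1,T}'$ (keeping also a strictly negative term inherited from the strict LMI \eqref{eq:DD_stab_ave}). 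Setting $\epsilon=\frac{\alpha}{2\beta}$ turns the scalar coefficient into $-\frac{\alpha}{2}+\beta\rho\,\frac{\alpha+2\beta}{\alpha}$, which is strictly negative exactly when $\rho<\frac{\alpha^2}{2\beta(2\beta+\alpha)}$, i.e.\ under \eqref{eq:stab_noise}. Finally, to upgrade $\preceq$ to the strict inequality needed when $\bar X_{1,T}\bar X_{1,T}'$ is only positive semidefinite, I would split $\Xi=\tau\Xi+(1-\tau)\Xi$ for a small $\tau>0$, bounding $\tau\Xi$ by a negative multiple of the identity through $\Xi\prec0$ (which is \eqref{eq:DD_stab_ave} via Schur) and $(1-\tau)\Xi$ by $-(1-\tau)\alpha\,\bar X_{1,T}\bar X_{1,T}'$; for $\tau$ small enough the $\bar X_{1,T}\bar X_{1,T}'$-coefficient stays negative while a term $\prec0$ survives, so $(A+BK_x)P(A+BK_x)'\prec P$. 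As $P\succ0$, this certifies that $A+BK_x$ is Schur, i.e.\ $K_x=\bar U_{0,T-1}Q^xP^{-1}$ is stabilizing.
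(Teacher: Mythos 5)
Your proof is correct and follows essentially the same route as the paper's: the same data-based representation $A+BK_x=(\bar X_{1,T}+\bar W_{0,T})Q^xP^{-1}$, the same Young's-inequality split with $\epsilon=\alpha/(2\beta)$, and the same reduction to the noise-energy bound $\bar W_{0,T}\bar W_{0,T}'\preceq \gamma\,\bar X_{1,T}\bar X_{1,T}'$ with $\gamma$ the left-hand side of \eqref{eq:stab_noise}. The only differences are in your favor: your Step 2 actually derives that bound with the exact constants $(6\gamma_1+3\gamma_2)/(1-2\gamma_1)$, which the paper imports from \cite[Corollary 1]{DePersis2020} without proof, and your $\tau$-splitting cleanly resolves the strict-versus-nonstrict inequality when $\bar X_{1,T}\bar X_{1,T}'$ is only positive semidefinite, a point the paper glosses over.
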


\begin{proof}
	The proof follows the same steps as the one of \cite[Theorem 5]{DePersis2020} and \cite[Corollary 1]{DePersis2020}, so we will omit most of the details. First of all, note that $\alpha,\beta >0$ exist since $\Psi \prec 0$ and $P \succ 0$ by hypothesis. 
	Further, under Assumption \ref{ass:noise3}, 
	\begin{equation}\label{eq:needed_torpove}
	\bar W_{0,T} \bar W_{0,T}' \preceq \gamma \bar X_{1,T} \bar X_{1,T}',
	\end{equation}
	with $\gamma$ equal to the left-hand side of \eqref{eq:stab_noise} \cite[Corollary 1]{DePersis2020}.
	Accordingly, we just need to prove that the matrix inequality in \eqref{eq:needed_torpove} with $\gamma < \alpha^2/(2\beta(2\beta+\alpha))$ ensures closed-loop stability. To this end, note that \eqref{eq:relax1} result in the closed-loop transition matrix $A+B K_x = (\bar X_{1,T}+\bar W_{0,T})Q^x P^{-1}$
	. Hence, $K_x$ ensures stability if 
	\begin{equation}\label{eq:DD_stab_average}
	(\bar X_{1,T}+\bar W_{0,T}) M 
	(\bar X_{1,T}+\bar W_{0,T})'-P
	\prec \mathbf{0}_{n}.
	\end{equation}
	This condition can be written as $\Xi+\Xi_1+\Xi_1'+\Xi_2 \prec \mathbf{0}_{n}$, with 
	$\Xi_1=\bar X_{1,T} M \bar W_{0,T}'$  and 
	$\Xi_2=\bar W_{0,T} M \bar W_{0,T}'$. 
	By applying 
	Young's matrix inequality, 
	a sufficient stability condition is thus given by
	$\Xi+\epsilon \bar X_{1,T} M \bar X_{1,T}'+
	(1+\epsilon^{-1}) \Xi_2
	\prec \mathbf{0}_{n}$
	with $\epsilon >0$ arbitrary. Because of the assumptions 
	on $\alpha$ and $\beta$, this latter condition holds if
	\begin{equation}\label{eq:DD_stab_average4}
	(\epsilon \beta -\alpha) \bar X_{1,T} \bar X_{1,T}'+
	\beta (1+\epsilon^{-1}) \bar W_{0,T} \bar W_{0,T}'
	\prec \mathbf{0}_{n}.
	\end{equation}
	By choosing $\epsilon=\alpha/(2\beta)$, we thus conclude that a 
	sufficient condition for stability is given by
	\begin{equation}\label{eq:DD_stab_average5}
	\bar W_{0,T} \bar W_{0,T}'
	\prec \frac{\alpha^2}{2\beta(2\beta+\alpha)} \bar X_{1,T} \bar X_{1,T}',
	\end{equation}
	which is satisfied because of \eqref{eq:needed_torpove} and 
	\eqref{eq:stab_noise}. \end{proof} 

\begin{remark}[On the choice of $\alpha$ and $\beta$]
From a practical standpoint, the parameters $\alpha$ and $\beta$ in Theorem \ref{thm:noisy} can be found by bisection. 	\hfill $\blacksquare$
\end{remark}

Theorem \ref{thm:noisy} makes no explicit use of the 
averaging strategy. Besides recovering  
Theorem \ref{thm:mismatch} as $N \rightarrow \infty$, this strategy 
plays a key role also for finite $N$, as remarked next.

\subsubsection{Hypothesis fulfillment} 
A first important motivation for considering the averaging strategy 
is related to some \emph{finite} sample properties of Gaussian noise. 
Consider 
$v \sim \mathcal {N} (0,\sigma^2 I_{n})$. 
As shown in \cite[Lemma 9]{deptLQR}, for any 
$\mu>0$ it holds that\footnote{An analogous bound holds for 
$\bar{V}_{1,T}$.}
\begin{eqnarray} \label{eq:Wainwright}
\|\bar V_{0,T-1}\| \leq \sigma \sqrt{\frac{T}{N}} \left( 1+ \mu + \sqrt{\frac{n}{T}} \right) \vspace{-.1cm}
\end{eqnarray}
with probability at least $1-e^{-T\mu^2/2}$. This inequality suggests that Assumption \ref{ass:noise3} is eventually satisfied if the experiments ensure hat $\bar U_{0,T-1}$, 
$\bar X_{0,T-1}$ and $\bar X_{1,T}$ do not vanish as $N$ increases. This is the case of  \emph{repeated} experiments, namely those ones that are carried out with the same input sequence $\mathcal U_{T-1}$ and from the same initial state $x(0)$. In this scenario, it is simple to see that for any probability level there
exists a finite $N$ such that Assumption \ref{ass:noise3} is satisfied
as long as $\mathcal U_{T-1}$ is persistently exciting, \emph{c.f.} \cite[Section 5.1]{deptLQR}. 

\subsubsection{Hypothesis verification} 

Given the data, the right-hand side of both \eqref{ass:noise3_a} and \eqref{ass:noise3_b} are known. Hence, Assumption~\ref{ass:noise3} can be checked from data 
whenever an upperbound on the noise is known deterministically. For Gaussian noise, \eqref{eq:Wainwright} makes it possible to have high-confidence bounds on the noise, hence on the probability that Assumption \ref{ass:noise3} holds. Note that, 
to have high confidence bounds, we need 
$T \mu^2$ large. Averaging permits 
to take 
$\mu^2$ large and compensate its effect on the estimate 
with $N$.

\subsubsection{Problem complexity}
Averaging allows us to exploit the benefits of large datasets, also required by systems identification and state-of-the-art data-driven control methods to cope with noisy data. Instead of running longer experiments, our strategy works by using several tests. This enables us to keep the number of constraints equal to the one in \eqref{eq:relax}, not increasing complexity of the data-driven problem. 

\begin{remark}[Increasing robustness]
	From Theorem \ref{thm:noisy} one sees that closed-loop stability becomes easier to 
	satisfy as $\beta$ gets smaller. This suggests to turn \eqref{eq:relax1} into an optimization problem where also $M$ is taken into account. Similarly to \cite{deptLQR}, this additional insight can be exploited by regularizing the objective in \eqref{eq:relax1} as follows
	\begin{eqnarray} \label{eq:robustness1}
	\begin{array}{rl}
	\displaystyle \underset{Q^x,Q^r,P}{\text{minimize}} &  
	\|\bar{\Psi}_{1}\| + \lambda \|\bar \Psi_2\|  + \lambda_1 \|M\| \\[0.1cm]
	\text{subject to} & \eqref{eq:DD_matchingx4}, \eqref{eq:DD_matchingr4},
	\eqref{eq:DD_stab_ave}
	\end{array}
	\end{eqnarray}	 
	with $\bar{\Psi}_{1}=\bar {X}_{1,T}Q^{x}\!-\!A_{M} P$, $\bar{\Psi}_2=\bar {X}_{1,T}Q^{r}\!-\! B_{M} P$ and $\lambda,\lambda_1>0$, which is again a semi-definite program. \hfill $\blacksquare$
\end{remark}

%

\section{Numerical examples}\label{Sec:examples}
We now assess the performance achieved when tackling Problem~\ref{pb:DDCstatic_problem} through the certainty-equivalence solution in 
\eqref{eq:relax1} by minimizing L1-norms, with $\lambda=1$. We consider two third-order systems with three inputs, one open-loop stable and the other unstable. Our goal is to decouple the dynamics of each state and enforce closed-loop stability. The examples have been designed so that the matching conditions in \eqref{eq:MB_matching} are fulfilled by a unique pair of optimal gains $K_{x}^{\star}$ and $K_{r}^{\star}$. This enables us to quantitatively evaluate the quality of the estimated controllers 
through \vspace{-.1cm}
\begin{equation}\label{eq:quality_evaluate}
\|K_{x}-K_{x}^{\star}\|_{2}, \qquad \quad \|K_{r}-K_{r}^{\star}\|_{2},\vspace{-.1cm}
\end{equation}
that have to be small for closed-loop matching to be attained.

Within each scenario, the performance is assessed by carrying out $100$ Monte Carlo simulations for increasing levels of noise, which is quantified via the \emph{Signal-to-Noise Ratio} (SNR) over the three output channels, \emph{i.e.,}
 \begin{equation}
 \mbox{SNR}_{j}=10 \log\left(\frac{\sum_{t=0}^{T}(x_{j}^{\mathrm{o}}(t))^{2}}{\sum_{t=0}^{T}(v_{j}^{(i)}(t))^{2}}\right)\!\!,~[dB] 
\end{equation} 
with $j\!=\!1,2,3$ and $i\!=\!1,\ldots,N$. Independently of the framework, the sequence $\{v(t)\}_{t=0}^{T}$ is zero-mean and Gaussian distributed, \emph{i.e.,} $v \sim \mathcal{N}(0,\sigma^{2}I)$, with increasing variance so as to span the interval $\mbox{SNR} \in [3,100]$~dB. By considering initial datasets of length $T=30$ samples, for each Monte Carlo run we evaluate the matching performance for an growing number $N$ of experiments, starting from $N=1$ up to $N=1000$. These conditions correspond to datasets comprising a minimum of $30$ state samples, up to $30 \cdot 10^{3}$ state points. The design problem was solved with the CVX package \cite{cvx}, by imposing
\begin{equation*}
\begin{bmatrix}
P & \bar{X}_{1,T}Q^{x}\\
\left(\bar{X}_{1,T}Q^{x}\right)' & P
\end{bmatrix} \succeq 10^{-10} \cdot I_{2\cdot n}.
\end{equation*}


\subsection{Model-reference control of a stable system}
\begin{figure*}[!tb]
	\centering
	\begin{tabular}{cc}
		\subfigure[$\|K_{x}-K_{x}^{\star}\|_{2}$ \emph{vs} average SNR over states]{\includegraphics[scale=.5,trim=0cm 2cm 4cm 17cm,clip]{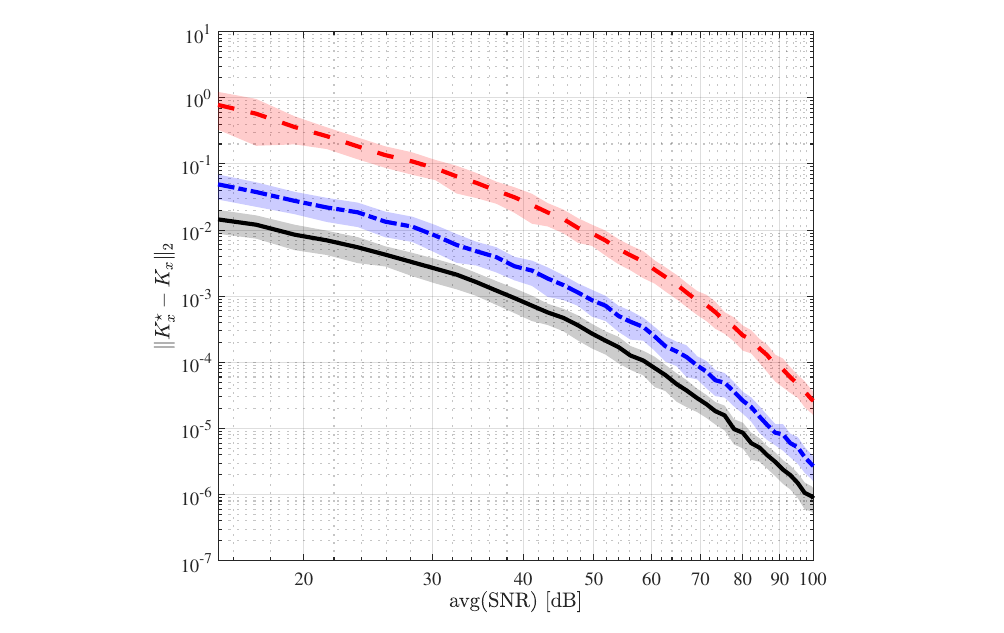}} \hspace{1.5cm}&\hspace{1.5cm} \subfigure[$\|K_{r}-K_{r}^{\star}\|_{2}$ \emph{vs} average SNR over states]{\includegraphics[scale=.5,trim=0cm 2cm 4cm 17cm,clip]{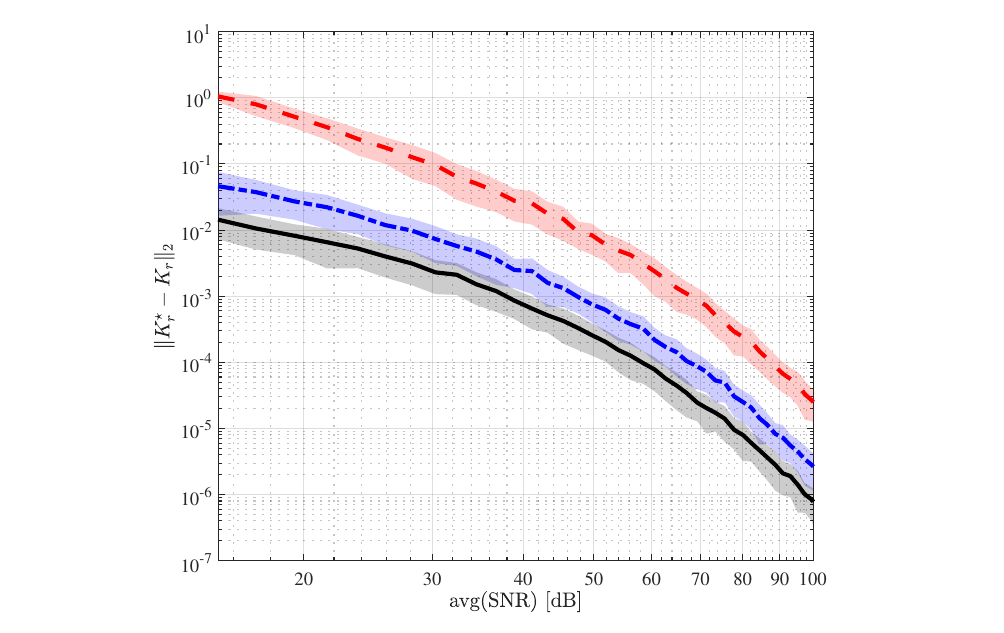}}
	\end{tabular}\vspace{-.1cm}
	\caption{Stable system: Average (solid lines) and standard deviation (colored areas) of the quality indexes in \eqref{eq:quality_evaluate} over the Monte Carlo runs resulting in a stable closed-loop. The results in red are referred to $N=1$, the blue ones are related to $N=100$, while the black results are associated with $N=1000$.}\label{Fig:Stab_comparison}
\end{figure*}
\begin{figure*}[!tb]
	\centering
	\begin{tabular}{ccc}
		\subfigure[First state component]{\includegraphics[scale=.475,trim=1.5cm 0.75cm 8cm 20cm,clip]{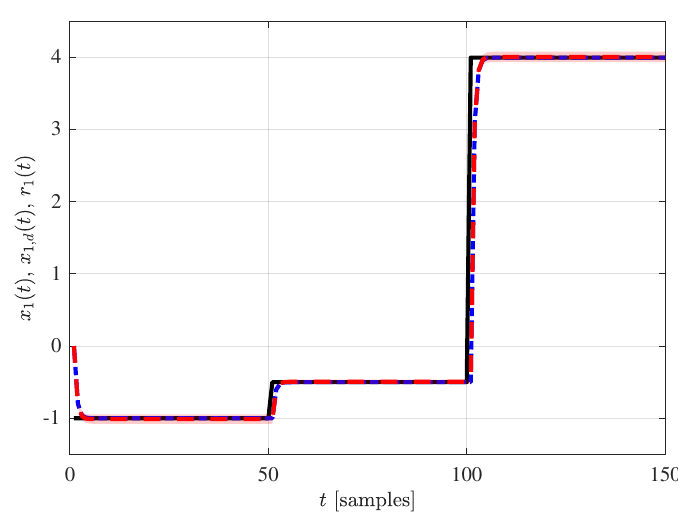}} &
		\subfigure[Second state component]{\includegraphics[scale=.475,trim=1.5cm 0.75cm 8cm 20cm,clip]{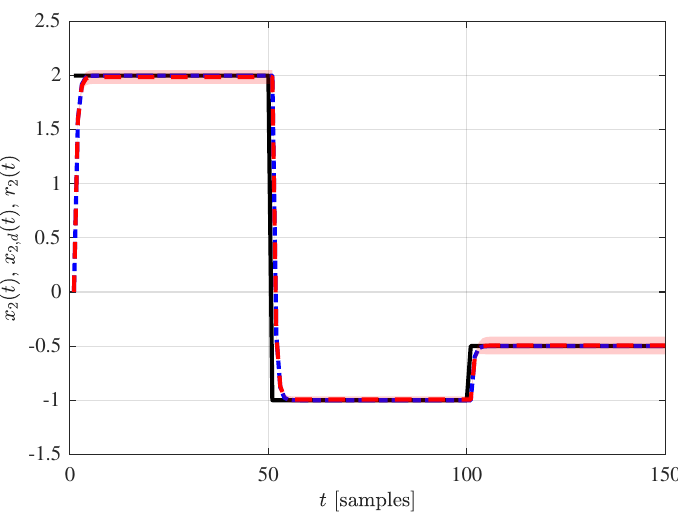}} & \subfigure[Third state component]{\includegraphics[scale=.475,trim=1.5cm 0.75cm 7cm 20cm,clip]{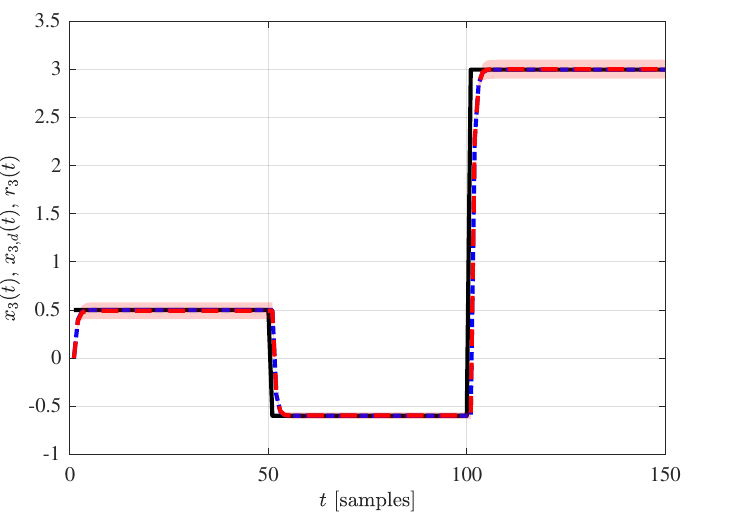}} 
	\end{tabular}\vspace{-.1cm}
	\caption{Stable system: reference (solid black), desired response (dotted dashed blue) \emph{vs} average (dashed red) closed-loop output for $N=100$. The red area indicates the standard deviation with respect to the average response, which is almost negligible when looking at the first state component.}\label{Fig:stab_traj}
\end{figure*}
We initially consider a randomly generated open-loop stable system of the form in \eqref{eq:noiseless_model}, characterized by the matrices
\begin{equation}
\begin{aligned}
A&=\begin{bmatrix}
0.1344  &  0.2155 &  -0.1084\\
0.4585  &  0.0797 &   0.0857\\
-0.5647 &  -0.3269  &  0.8946
\end{bmatrix},\\
B&=\begin{bmatrix}
0.9298  &  0.9143 &  -0.7162\\
-0.6848  & -0.0292 &  -0.1565\\
0.9412  & 0.6006  &  0.8315
\end{bmatrix},
\end{aligned}
\end{equation}
with 
$A$ having three distinct real eigenvalues $\{0.9536,\!-0.2118,0.3670\}$. Our aim is to design a static law as in \eqref{eq:noisy_law} such that the behavior of the reference model in \eqref{eq:desired_behavior} with $A_{M}=0.2 \cdot I_{3}$ and $B_{M}=0.8 \cdot I_{3}$ 
is matched. This choice allows us to speed up the dynamics of the open-loop system and to attain zero steady-state error when tracking step like references. Perfect matching is achieved with 
\begin{equation}
\begin{aligned}
K_{x}^{\star}&=\begin{bmatrix}
0.6308 &  -0.2920  &  0.3080\\
-0.3814  &  0.4011 &  -0.7166\\
0.2405  &  0.4340  & -0.6664
\end{bmatrix}, \\
K_{r}^{\star}&=\begin{bmatrix}
0.0768 &  -1.3126 &  -0.1809\\
0.4654  &  1.5957  &  0.7012\\
-0.4231  &  0.3332  &  0.6604
\end{bmatrix}.
\end{aligned}
\end{equation}
To retrieve them from data, the system is fed with a random input sequence uniformly distributed within $[-2,2]$, which guarantees that the plant is persistently excited. 

The obtained quality indexes (see \eqref{eq:quality_evaluate}) are shown in \figurename{~\ref{Fig:Stab_comparison}}. As expected, the lower the noise corrupting the state measurements is, the better the quality of the retrieved gains results. Moreover, an increasing number $N$ of experiments leads to a considerable reduction in the error between the optimal and data-driven gains, as we get closer to the limit condition. Indeed, the plant is never destabilized by the controller when $N \geq 100$, even for an average SNR $\approx 4$~dB, while the attained closed-loop is always stable for $N=2$ whenever the average SNR over the states is above $11$~dB. 

We then assess the actual matching performance, by comparing the desired and achieved closed-loop output for 
a piecewise constant reference. This comparison is shown in \figurename{~\ref{Fig:stab_traj}}, where we focus on the set of controllers retrieved when the data are characterized by an average SNR $\in [11.18, 14.27]$~dB. Clearly, the desired and closed-loop behavior closely match 
on average and the variance over the Monte Carlo runs is almost negligible on transients. 
A slight mismatch occurs at steady-state, when we do not 
exactly attain 
a unitary closed-loop DC gain. 

\subsection{Model-reference control of an unstable system}
\begin{figure*}[!tb]
	\centering
	\begin{tabular}{cc}
		\subfigure[$\|K_{x}-K_{x}^{\star}\|_{2}$ \emph{vs} average SNR over states]{\includegraphics[scale=.5,trim=0cm 2cm 4cm 17cm,clip]{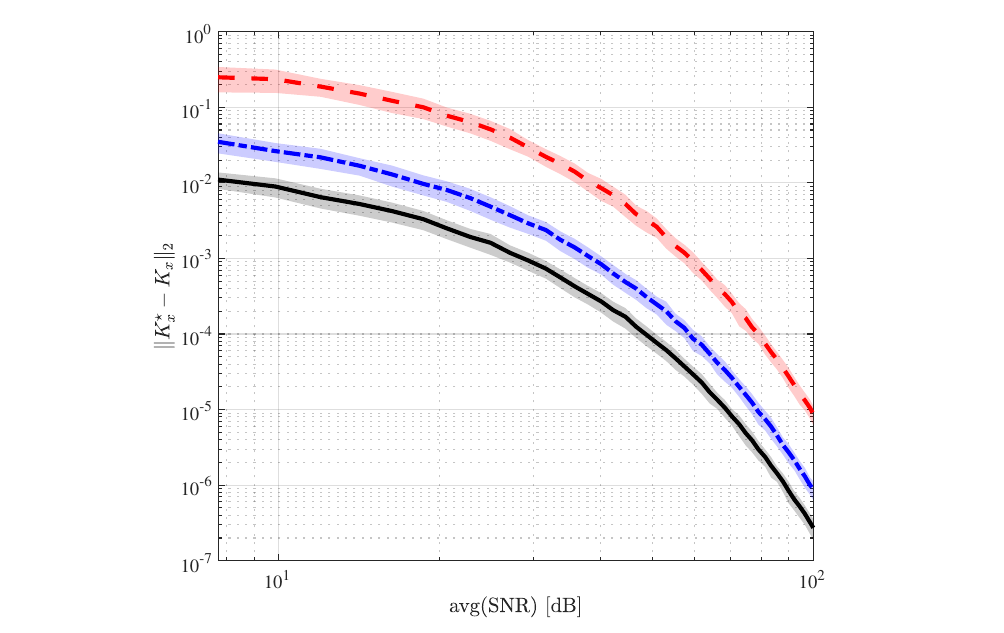}} \hspace{1.5cm}&\hspace{1.5cm} \subfigure[$\|K_{r}-K_{r}^{\star}\|_{2}$ \emph{vs} average SNR over states]{\includegraphics[scale=.5,trim=0cm 2cm 4cm 17cm,clip]{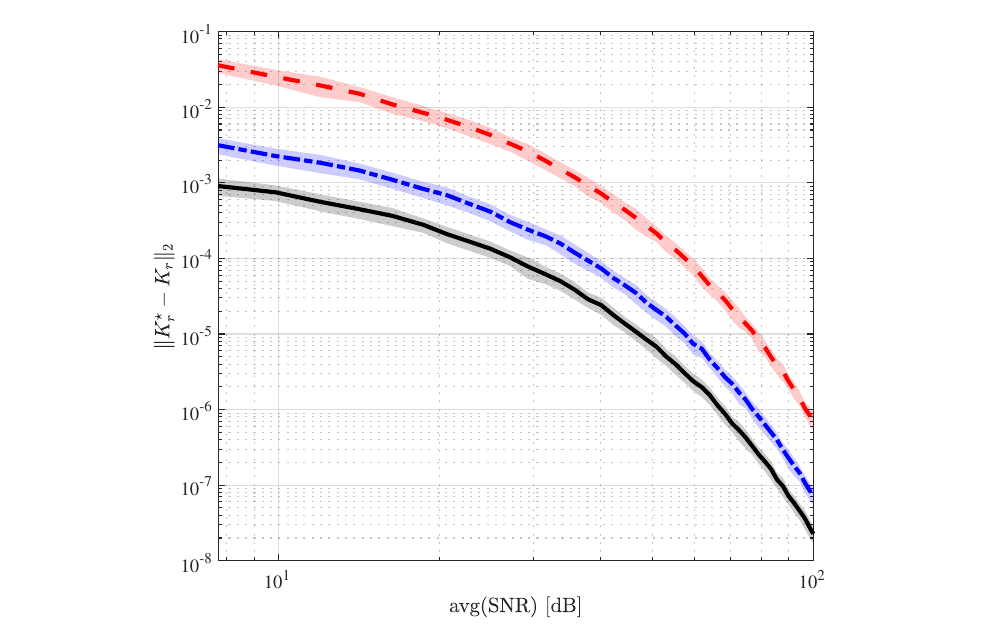}}
	\end{tabular}\vspace{-.1cm}
	\caption{Unstable system: Average (solid lines) and standard deviation (colored areas) of the quality indexes in \eqref{eq:quality_evaluate} over the Monte Carlo runs resulting in a stable closed-loop. The results in red are referred to $N\!=\!1$, the blue ones are related to $N=100$, while the black results are associated with $N=1000$.}\label{Fig:Unstab_comparison}
\end{figure*}

\begin{figure*}[!tb]
	\centering
	\begin{tabular}{ccc}
		\subfigure[First state component]{\includegraphics[scale=.475,trim=1.5cm 0.75cm 8cm 20cm,clip]{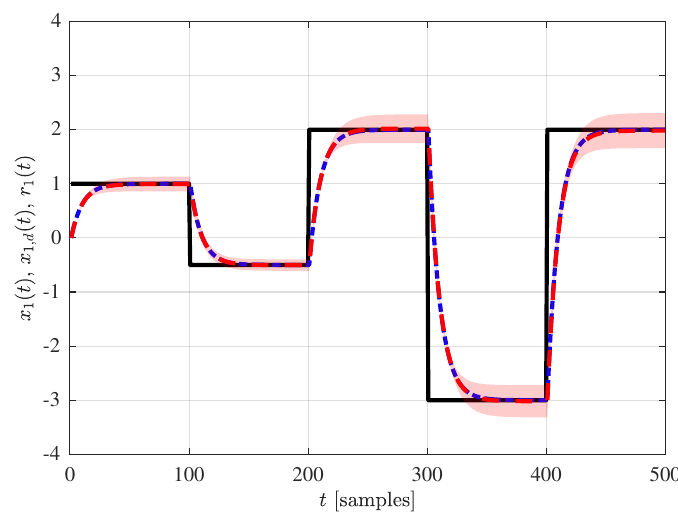}} &
		\subfigure[Second state component]{\includegraphics[scale=.475,trim=1.5cm 0.75cm 8cm 20cm,clip]{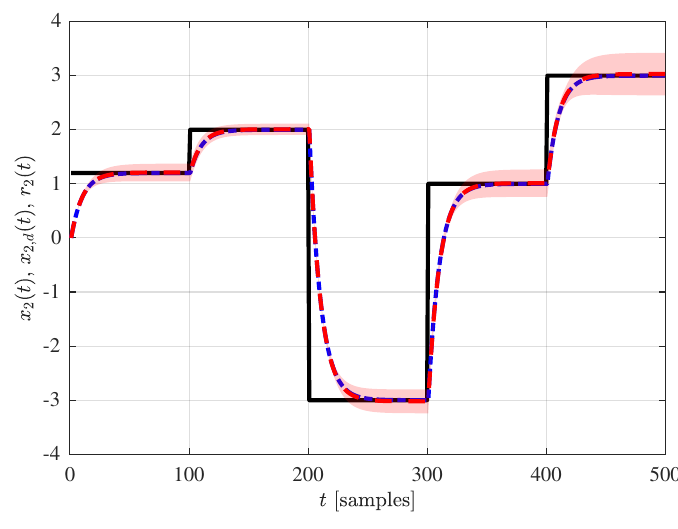}} & \subfigure[Third state component]{\includegraphics[scale=.475,trim=1.5cm 0.75cm 7cm 20cm,clip]{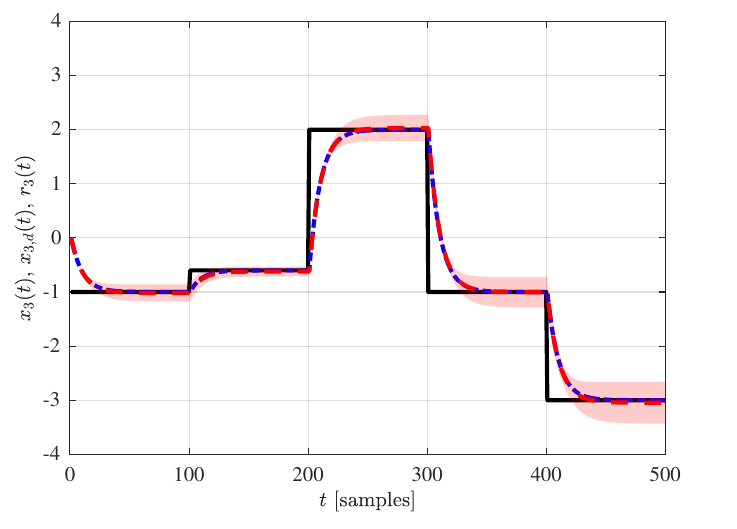}} 
	\end{tabular}\vspace{-.1cm}
\caption{Unstable system: reference (solid black), desired response (dotted dashed blue) \emph{vs} average (dashed red) closed-loop output for $N=100$. The red area indicates the standard deviation with respect to the average response.}\label{Fig:unstab_traj}
\end{figure*}
\begin{table}[!tb]
	\caption{Unstable closed-loop \emph{vs} average SNR and $N$.}\label{Tab:1}
	\centering
	\begin{tabular}{clc}
		\multicolumn{2}{c}{} & \multicolumn{1}{c}{\# unstable instances}\\
		\cline{2-3}
		\multirow{3}{.2\textwidth}{avg(SNR) $\!\in\! [14.12, 17.68]$~dB} & $N=1$ & 17\\
		\cline{2-3}
		\multicolumn{1}{c}{} & $N=2$ & 4 \\
		\cline{2-3}
		\multicolumn{1}{c}{} & $N=100$ & 0 \\
		\hline 
		\multirow{3}{.18\textwidth}{avg(SNR) $\!\in\! [6.08, 9.33]$~dB} & $N=1$ & 65\\
		\cline{2-3}
		\multicolumn{1}{c}{} & $N=2$ & 48\\
		\cline{2-3}
		\multicolumn{1}{c}{} & $N=100$ & 0\\
		\hline
	\end{tabular}
\end{table}
Let us now consider the benchmark unstable system introduced in \cite{Recht2020}, with
\begin{equation}
A=\begin{bmatrix}
1.01 & 0.01 & 0\\
0.01 & 1.01 & 0.01\\
0 & 0.01 & 1.01
\end{bmatrix}, \quad B=I_{3}.
\end{equation}
In this case, the matrices of the reference model in \eqref{eq:desired_behavior} are chosen as $A_{M}=0.9 \cdot I_{3}$ and $B_{M}=0.1 \cdot I_{3},$ 
so as to dictate a stable behavior and guarantee that step-references are perfectly tracked. This reference model is matched with the gains \vspace{-.1cm}
\begin{equation}
K_{x}^{\star}=\begin{bmatrix}
-0.11  & -0.01 & 0\\
-0.01 &  -0.11 & -0.01\\
0 & -0.01 &  -0.11
\end{bmatrix}, \quad K_{r}^{\star}=0.1 \cdot I_{3}.
\end{equation}
Due to the open-loop instability of the plant, experiments are carried out in closed-loop by stabilizing the system with a static controller of the same form as \eqref{eq:noisy_law}, with $K_{x}=-I_{n}$ and $K_{r}=I_{n}$. The reference to be tracked throughout the experiments is selected as a sequence of uniformly distributed samples within $[-5,10]$ generated at random. This choice guarantees that the input fed to the system is persistently exciting.

\figurename{~\ref{Fig:Unstab_comparison}} reports the quality indexes in \eqref{eq:quality_evaluate} for an increasing level of noise and different dimensions of the dataset. As for the stable plant, larger sets of data lead to a better reconstruction of the matching gains, as the average formulation in \eqref{eq:relax1} increasingly resemble the noiseless ones. This result is aligned with the reduction in the number of unstable instances obtained when more experiments are performed, as shown in \tablename{~\ref{Tab:1}}. Indeed, their number is consistently reduced as long as $N$ increases, becoming zero when the data gathered over $100$ experiments are used. 
The retrieved 
controllers are always stabilizing for datasets yielding 
average SNRs above $20$~dB, independently of $N$. 



Let us now focus on the performance achieved when the training data are characterized by an average SNR $\in [10.33, 13.53]$~dB. As shown in \figurename{~\ref{Fig:unstab_traj}}, the desired behavior and the attained one match exactly on average, when a piecewise constant reference is considered. While the transient behavior is generally not affected by differences in the realization of the dataset, there is a variation in the steady state values of the closed-loop output over the $100$ gains retrieved. This behavior can be associated with slight differences between the optimal and actual gains, that eventually lead to a steady-state offset. 
This problem can be handled by exploiting an integrator, which we aim at introducing in future works.  

\section{Conclusions} \label{sec:conclusions}
In this work, a direct data-driven design strategy for model-reference 
linear controllers guaranteeing Lyapunov stability has been introduced and 
discussed. Unlike existing data-based approaches, the one proposed here 
allows us to obtain non-asymptotic stability guarantees
also in case of unfeasible perfect matching and noisy measurements.
Numerical studies have confirmed the effectiveness of the 
proposed strategy.

Future work will also be devoted to the analysis of scenarios where the order of the reference model does not coincide with that of the plant 
and to further experimental validation, especially considering high-order systems. Moreover, we will explore approaches for the automatic selection of the reference model based on soft specifications provided by the user, similarly to \cite{BRESCHI2021bis}.

\bibliographystyle{plain}
\bibliography{ModelReferenceDDC_v9.bib}


\end{document}